\newenvironment{proof}{{\noindent\it\textbf{Proof}}\quad}{\hfill $\square$\par}
\def\dse#1{\vskip 0.6cm\noindent
        {\large\bf #1}
        \vskip 0.4cm}
\def\dse#1{\vskip 0.6cm\noindent
        {\large\bf #1}
        \vskip 0.4cm}
\begin{document}\large
\newtheorem{lemma}{Lemma}[section]
\newtheorem{theorem}[lemma]{Theorem}
\newtheorem{example}[lemma]{Example}
\newtheorem{definition}[lemma]{Definition}
\newtheorem{proposition}[lemma]{Proposition}
\newtheorem{conjecture}[lemma]{Conjecture}
\newtheorem{corollary}[lemma]{Corollary}
\newtheorem{remark}{Remark}
\begin{center}
\textbf{\LARGE{Cyclic codes and some new entanglement-assisted quantum MDS codes}}\footnote { E-mail
addresses: jiangw000@163.com(W.Jiang), zhushixin@hfut.edu.cn(S.Zhu), chenxiaojing0909@ahu.\\edu.cn(X.Chen).
}\\
\end{center}

\begin{center}
{ { Wan Jiang$^1$, \  Shixin Zhu$^{1}$, \  Xiaojing Chen$^2$} }
\end{center}

\begin{center}
\textit{1\ School of Mathematics, Hefei University of
Technology, Hefei 230601, Anhui, P.R.China\\
2\  School of Internet, Anhui University, Hefei 230039, Anhui, P.R.China}
\end{center}

\noindent\textbf{Abstract:} Entanglement-assisted quantum error correcting codes (EAQECCs) play a significant role in protecting quantum information from decoherence and quantum noise. Recently, constructing entanglement-assisted quantum maximum distance separable (EAQMDS) codes with flexible parameters has received much attention. In this work, four families of EAQMDS codes with a more general length are presented. And the method of selecting defining set is different from others. Compared with all the previously known results, the EAQMDS codes we constructed have larger minimum distance. All of these EAQMDS codes are new in the sense that their parameters are not covered by the quantum codes available in the literature.\\

\noindent\textbf{Keywords}:\ EAQECCs, EAQMDS codes, Cyclic codes, Defining set \     \\   
\section{Introduction}

    Quantum error correcting codes (QECCs) play an important role in quantum computing and quantum communications \cite{ref1}-\cite{ref2}. Usually, we use $[[n, k, d]]_q$ to denote a $q$-ary quantum error correcting code (QECC), whose length is $n$ with size $q^k$ and minimum distance $d$. It can detect up to $d-1$ quantum errors and correct up to $\lfloor \frac{d-1}{2}\rfloor$ quantum errors. Similar to classical linear codes, there exist quantum Singleton bound $k\leq n-2d+2$. If $k=n-2d+2$, the QECC is called a quantum maximum-distance separable (QMDS) code. As we all know, one of the central topics in quantum coding theory is to construct quantum codes with good parameters, especially QMDS codes. After Calderbank et al. \cite{ref3} found that QECCs can be constructed from classical self-orthogonal codes with certain inner product, there are many good works about QECCs and QMDS codes \cite{ref4}-\cite{ref10}.

    The CSS construction and Hermitian construction are two famous construction methods for QECCs. However, they both need classical codes to be dual-containing or self-orthogonal, which is not easy to satisfy. This limitation condition form a gap between classical linear codes and QECCs. In 2006, Brun et al. \cite{ref11} proposed the concept of EAQECCs and solved this problem successfully. They proved that EAQECCs allows  non-dual-containing classical codes to construct QECCs if the sender and receiver shared entanglement in advance. 

    Let $q$ be a prime power. A $q$-ary EAQECC, denoted as $[[n, k, d; c]]_q$, that encodes $k$ information
    qubits into $n$ channel qubits with the help of $c$ pairs of maximally entangled states and can correct up to $\lfloor \frac{d-1}{2} \rfloor$ errors, where $d$ is the minimum distance of the code. Similar to QECCs,
    there is entanglement-assisted quantum Singleton bound in the following proposition.
  
 \begin{proposition}\label{pro:1.1}  \rm\cite{ref11,ref12} 
	\emph{Assume that $\mathcal{C}$ is an entanglement-assisted quantum code with parameters $[[n, k, d; c]]_q$. If $d\leq (n+2)/2 $, then $\mathcal{C}$ satisfies the entanglement-assisted Singleton bound
		 $$n+c-k\geq 2(d-1).$$
		  If $\mathcal{C}$ satisfies the equality $n+c-k= 2(d-1)$ for $d\leq (n+2)/2$, then it is called an entanglement-assisted quantum MDS code.}
\end{proposition}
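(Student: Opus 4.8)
The plan is to prove the bound information-theoretically, exactly as in the original derivations of \cite{ref11,ref12}, by translating the erasure-correcting power of $\mathcal{C}$ into an entropic (decoupling) constraint and then applying a dimension bound. The starting point is the standard fact that a code of minimum distance $d$ corrects the erasure of \emph{any} $d-1$ of its $n$ physical qudits. The hypothesis $d\leq (n+2)/2$ is precisely what guarantees $2(d-1)\leq n$, so I may fix two \emph{disjoint} sets $A$ and $B$ of $d-1$ physical qudits each and let $C$ denote the remaining $|C|=n-2(d-1)\geq 0$ qudits. This is the only place the assumption $d\leq (n+2)/2$ enters.

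Next I would set up the purified picture, which is where the preshared entanglement must be tracked carefully. Measuring von Neumann entropies $S(\cdot)$ in base $q$, let $R$ be a reference system maximally entangled with the $k$ logical qudits, so that $S(R)=k$, and let $E_B$ denote Bob's halves of the $c$ maximally entangled pairs, which are noiseless and satisfy $S(E_B)=c$. Applying the encoding isometry produces a pure state $\Psi$ on the joint system $R\,A\,B\,C\,E_B$, where $ABC$ are the $n$ channel qudits. The key translation step is that correcting the erasure of $A$ is equivalent to $R$ being recoverable from the complementary system $B C E_B$ (the unerased channel qudits \emph{together with} the shared entanglement), which by the usual decoupling characterization of quantum error correction is in turn equivalent to $I(R:A)=0$; likewise $I(R:B)=0$.

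I would then finish with a short entropy computation. Purity of $\Psi$ gives $S(RA)=S(BCE_B)$ and $S(RBCE_B)=S(A)$, and combining these with $I(R:A)=0$ yields $I(R:BCE_B)=2S(R)=2k$. Applying the chain rule together with $I(R:B)=0$ and the standard bound $I(X:Y\mid Z)\leq 2\log_q\dim Y$,
\[
2k=I(R:BCE_B)=I(R:B)+I(R:CE_B\mid B)=I(R:CE_B\mid B)\leq 2\log_q\dim(CE_B)=2(|C|+c),
\]
so that $k\leq n-2(d-1)+c$, which rearranges to the claimed inequality $n+c-k\geq 2(d-1)$.

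The crux is the bookkeeping around the preshared entanglement rather than any single hard estimate. The argument is structurally identical to the ordinary quantum Singleton bound; the only difference is that the noiseless system $E_B$ joins the retained side $C$ in the final conditional mutual information, enlarging its effective dimension from $q^{|C|}$ to $q^{|C|+c}$ and thereby producing the extra $+c$ in the bound. Accordingly, the step that most needs justification is the claim that $\Psi$ is globally pure with $S(E_B)=c$ and that erasure correction is genuinely equivalent to decoupling $R$ from the erased set \emph{in this augmented system} that includes $E_B$; once this is granted, the remaining manipulations are routine.
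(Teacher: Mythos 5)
Your proof is correct, but there is nothing in the paper to compare it against: Proposition 1.1 is imported from \cite{ref11,ref12} as a known fact, is never proved in the paper, and is only used in Section 3 as the criterion $n+c-k=2(d-1)$ for certifying that the constructed codes are EAQMDS. So your argument is necessarily a different route --- a self-contained entropic derivation of the kind found in the cited literature rather than a citation. Checking it on its own merits, the bookkeeping is sound: the hypothesis $d\leq (n+2)/2$ is used exactly once, to guarantee $2(d-1)\leq n$ so that the disjoint erasure sets $A$ and $B$ exist; the global state on $R\,A\,B\,C\,E_B$ is pure because the encoding isometry acts only on Alice's systems; purity gives $S(BCE_B)=S(RA)$ and $S(RBCE_B)=S(A)$, so $I(R:A)=0$ yields $I(R:BCE_B)=2S(R)=2k$; and the chain rule with $I(R:B)=0$ plus the dimension bound $I(X:Y\mid Z)\leq 2\log_q \dim Y$ gives $2k\leq 2(|C|+c)$, which rearranges to $n+c-k\geq 2(d-1)$. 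The two facts you flag as load-bearing are indeed the only nontrivial inputs, and both are standard: (i) an EAQECC of minimum distance $d$ corrects any $d-1$ erasures provided the decoder holds Bob's $c$ noiseless ebit halves, and (ii) exact recoverability of $R$ from a subsystem of a pure state is equivalent to $I(R:\cdot)=0$ on the complementary subsystem (the Schumacher--Nielsen condition); point (ii) applies in the augmented system including $E_B$ precisely because the EAQECC error-correction conditions are formulated relative to the $n$ channel qudits together with $E_B$. What your approach buys over the paper's bare citation is an actual proof that makes the role of the entanglement transparent: $E_B$ sits on the decoder's side in both erasure scenarios, so it joins $C$ in the final dimension count, and that is exactly where the $+c$ in the bound comes from.
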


 Although it is possible to construct EAQECCs from any classical linear codes in theory, it is not easy to calculate the number of entangled states $c$. This problem gets solved after Li et al. proposed the concept of decomposing the defining set \cite{ref13}. Besides, they used this method to construct some EAQECCs with good parameters \cite{ref14}. After that, this concept is generated to more general cases, from cyclic codes to negacyclic codes and constacyclic codes \cite{ref15}-\cite{ref16}. As EAQMDS codes are an important class of quantum codes, a great deal of efforts have been made to construct some new EAQMDS codes in recent years. There are many new EAQMDS codes with a small number of entangled states $c$ have been constructed \cite{ref18}-\cite{ref24}. 
     
    Actually, if we want to correct more quantum errors, the EAQECCs we constructed need to have larger minimum distance. And the larger the minimum distance of EAQECCs are, the more the entangled states $c$ will be employed. However, it is not an easy task to analyze the accurate parameters if the value of entangled states $c$ is too large or flexible. Recently, some scholars have obtained great progress. Luo et al. obtained several new infinite families of EAQECCs with
    flexible parameters by using generalized Reed–Solomon codes and
    extended generalized Reed–Solomon codes \cite{ref6}. In \cite{ref25}, Qian and Zhang constructed some new EAQMDS codes with length $n = q^2+1$ and some new entanglement-assisted quantum almost MDS codes. Besides, Wang et al. got series of EAQECCs with flexible parameters of length $n = q^2+1$ by using constacyclic codes \cite{ref26}.

    Based on the previous work, this paper dedicates to study cyclic codes to construct EAQECCs of length $n=(q^2+1)/a$ with flexible parameters naturally, where $a=m^2+1$ ($m\geq1$~is~odd). More precisely, we construct four classes of $q$-ary EAQMDS codes with parameters as follows:
    
    (1) $[[n, n-4\alpha(q-m-a\alpha)-4mk, 2(\alpha q+mk)+1; 4\alpha(a\alpha+m)]]_q,$ where $q=2ak+m$ $(k\geq 1)$ is an odd prime power and $1\leq \alpha\leq k$.

    (2) $[[n, n-4\alpha(q-a-m-a\alpha)-4(a+m)k-(a+2m), 2[\alpha q +(a+m)k+\frac{a+2m}{2}]+1; 4\alpha(a\alpha+a+m)+a+2m]]_q,$ where $q=2ak+a+m$ $(k\geq 1)$ is an odd prime power and $1\leq \alpha\leq k$.

    (3) $[[n, n-4\alpha(q-a+m-a\alpha)-4(a-m)k-(a-2m), 2[\alpha q +(a-m)k+\frac{a-2m}{2}]+1;
   4\alpha(a\alpha+a-m)+a-2m]]_q$, where $q=2ak+a-m$ $(k\geq 1)$ is an odd prime power and $1\leq \alpha\leq k$.

    (4) $[[n, n-4\alpha[q-(2a-m)-a\alpha]-4(2a-m)k-4(a-m), 2[\alpha q+(2a-m)k+2(a-m)]+1; 4\alpha(a\alpha +2a-m)+4(a-m)]]_q,$ where $q=2ak+2a-m$ $(k\geq 1)$ is an odd prime power and $1\leq \alpha\leq k$.
    
    The main organization of this paper is as follows. In Sect.2, some basic background and results about cyclic codes and EAQECCs are reviewed. In Sect.3, four classes of EAQMDS codes with length $n=(q^2+1)/a$ are obtained, where $a=m^2+1$ ($m\geq1$~is~odd). Sect.4 concludes the paper.

\section{Preliminaries} 
    
    In this section, we will review some relevant concepts and basic theory on cyclic codes and EAQECCs.
    
    Let $\mathbb{F}_{q^2}$ be a finite field of $q^2$ elements and $\mathbb{F}_{q^2}^n$ be the $n$-dimensional row vector space over $\mathbb{F}_{q^2}$, where $q$ is a prime power, $k$ and $n$ are positive integer. Let $\mathcal{C}$ be a $q^2$-ary $[n,k,d]$ linear code of length $n$ with dimension $k$ and minimum distance $d$, which is a $k$-dimensional linear subspace of $\mathbb{F}_{q^2}^n$. Then $\mathcal{C}$ satisfies the following Singleton bound:
    $$n-k \geq d-1.$$ When the equality $n-k=d-1$ holds, then $\mathcal{C}$ is called an maximal distance separable (MDS) code.
    Let vectors $\textbf {x} = (x_0,x_1,\cdots,x_{n-1})$ and $\textbf {y} = (y_0,y_1,\cdots,y_{n-1})\in\mathbb{F}_{q^2}^n$, their Hermitian inner product is defined as $$\langle \textbf x,\textbf y \rangle_h=\sum_{i=0}^{n-1} {x_i}^qy_i={x_0}^qy_0+{x_1}^qy_1+\cdots+{x_{n-1}}^qy_{n-1}.$$
    The Hermitian dual code of $\mathcal{C}$ is defined as
    $$\mathcal{C}^{\perp_h}=\{\textbf x \in\mathbb{F}_{q^2}^n~|~\langle\textbf x,\textbf y \rangle_h=0~{\rm for~ all}~\textbf y\in \mathcal{C}\}.$$
 
    If $\mathcal{C}\subseteq \mathcal{C}^{\perp_h}$, then $\mathcal{C}$ is called a Hermitian self-orthogonal code.
    If $\mathcal{C}^{\perp_h}\subseteq\mathcal{C}$, then $\mathcal{C}$ is called a Hermitian dual-containing code.
  
    Let $\mathcal{C}$ be a linear code over $\mathbb{F}_{q^2}^n$ with length $n$ and $gcd(q,n)=1$.
    If for any codeword  $(c_0,c_1,\cdots,c_{n-1})\in \mathcal{C}$ implies its cyclic shift
    $(c_{n-1},c_0,\cdots,c_{n-2})\in \mathcal{C}$, then $\mathcal{C}$ is said to be a cyclic code. For a cyclic code $\mathcal{C}$, each codeword $c=(c_0,c_1,\cdots,c_{n-1})$ is identified with its polynomial form $c(x)=c_0+c_1x+\dots+c_{n-1}x^{n-1}$ and cyclic code $\mathcal{C}$ of length $n$ is an ideal of $\mathbb{F}_{q^2}[x]/\langle x^n-1 \rangle $. Thus, $\mathcal{C}$ can be generated by a monic polynomial factor of $x^n-1$, i.e. $\mathcal{C}=\langle g(x) \rangle$ and $g(x)|(x^n-1)$. Then $g(x)$ is called the generator polynomial of $\mathcal{C}$ and the dimension of $\mathcal{C}$ is $n-deg(g(x))$.
    
    Let $\lambda$ denote a primitive $n$-th root of unity in some extension field of $\mathbb{F}_{q^2}$. 
    Hence, $ x^n-1=\prod_{i=0}^{n-1}(x-\lambda^i)$. The defining set of cyclic code $\mathcal{C}=\langle g(x)\rangle$ of length $n$ is the set $Z=\{0\leq i\leq n-1~|~g(\lambda^i)=0\}$.
    For $ 0\leq i \leq n-1$, the $q^2$-cyclotomic coset modulo $n$ containing $i$ is defined by the set $$C_i=\{i, iq^2,iq^4,...,iq^{2(m_i-1)} \},$$ where $m_i$ is the smallest positive integer such that $iq^{2m_i}\equiv i~mod~n$. Each $C_i$ corresponds to an irreducible divisor of $x^n-1$ over $\mathbb{F}_{q^2}$. 
    Let $\mathcal{C}$ be an $[n,k,d]$ cyclic code over $\mathbb{F}_{q^2}$ with defining set $Z$.
    Obviously, $Z$ must be a union of some $q^2$-cyclotomic cosets modulo $n$ and $dim(\mathcal{C})=n-|Z|$.
    
    For $ 0\leq b+\delta-2 \leq n-1 $, if $\mathcal{C}$ has defining set $Z_\delta=C_b\bigcup C_{b+1}\bigcup \cdots \bigcup C_{b+\delta-2}$, then it is called cyclic BCH code with designed distance $\delta$. Then the following BCH bound is a lower bound for cyclic codes.
    
    \begin{proposition}\label{pro:2.1}
    \textbf{\textup{\cite{ref27} (BCH Bound for Cyclic Codes)}}~Let $\mathcal{C}$ be a $q^2$-ary cyclic code of length $n$ with defining set $Z$. If $Z$ contains $d-1$ consecutive elements, then the minimum distance of $\mathcal{C}$ is at least $d$.
    \end{proposition}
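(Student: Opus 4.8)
The plan is to argue by contradiction using a Vandermonde determinant. Since $Z$ contains $d-1$ consecutive residues, I would write them as $b, b+1, \ldots, b+d-2$; because $Z$ is the defining set of $\mathcal{C}=\langle g(x)\rangle$, every codeword $c(x)\in\mathcal{C}$ is a multiple of $g(x)$ and hence vanishes at the corresponding powers of $\lambda$, i.e. $c(\lambda^{b+j})=0$ for all $0\leq j\leq d-2$. Suppose, toward a contradiction, that there is a nonzero codeword of Hamming weight $w\leq d-1$, and let its nonzero coordinates occur in positions $i_1<i_2<\cdots<i_w$, so that $c(x)=\sum_{\ell=1}^{w}c_{i_\ell}x^{i_\ell}$ with every $c_{i_\ell}\neq 0$.

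The next step is to read the vanishing conditions as a homogeneous linear system in the unknowns $c_{i_1},\ldots,c_{i_w}$. Evaluating at $\lambda^{b},\lambda^{b+1},\ldots,\lambda^{b+w-1}$ (the first $w\leq d-1$ of the consecutive roots) yields $w$ equations whose coefficient matrix is $M=\bigl(\lambda^{(b+j)i_\ell}\bigr)_{0\leq j\leq w-1,\,1\leq \ell\leq w}$. I would then factor $\lambda^{b\,i_\ell}$ out of the $\ell$-th column, which leaves a genuine Vandermonde matrix built from the nodes $\lambda^{i_1},\ldots,\lambda^{i_w}$, so that $\det M=\bigl(\prod_{\ell=1}^{w}\lambda^{b\,i_\ell}\bigr)\prod_{1\leq s<t\leq w}\bigl(\lambda^{i_t}-\lambda^{i_s}\bigr)$.

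To finish, I would observe that $\lambda$ is a primitive $n$-th root of unity and the exponents $i_1,\ldots,i_w$ are distinct integers in $\{0,\ldots,n-1\}$, hence the nodes $\lambda^{i_\ell}$ are pairwise distinct and every factor $\lambda^{i_t}-\lambda^{i_s}$ is nonzero. Thus $\det M\neq 0$, the system forces $c_{i_1}=\cdots=c_{i_w}=0$, contradicting the assumption that $c(x)$ is nonzero. Therefore no nonzero codeword has weight less than $d$, and the minimum distance of $\mathcal{C}$ is at least $d$.

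The main obstacle is bookkeeping rather than conceptual: one must ensure the chosen square submatrix is exactly the scaled Vandermonde matrix above, and that distinctness of the nodes genuinely follows from $\lambda$ having multiplicative order $n$ (which in turn uses $\gcd(q,n)=1$). A secondary point worth stating carefully is that the argument is invariant under the choice of starting index $b$, since the per-column scalars $\lambda^{b\,i_\ell}$ never vanish; this confirms that the bound depends only on the existence of $d-1$ consecutive elements in $Z$, not on where the run begins.
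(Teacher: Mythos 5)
Your proof is correct. Note that the paper itself gives no proof of Proposition~\ref{pro:2.1} at all: it is stated as a known result and attributed to the reference \cite{ref27} (MacWilliams and Sloane), so there is no in-paper argument to compare against. What you have written is precisely the classical proof that the citation points to: restrict a putative codeword of weight $w\leq d-1$ to its support, evaluate at $w$ of the consecutive roots $\lambda^{b},\ldots,\lambda^{b+w-1}$, factor the column scalars $\lambda^{b i_\ell}$ out of the resulting matrix, and invoke the nonvanishing of the Vandermonde determinant $\prod_{s<t}\bigl(\lambda^{i_t}-\lambda^{i_s}\bigr)$, which holds because $\lambda$ has multiplicative order exactly $n$ and the support positions are distinct residues modulo $n$. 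All steps are sound, including the two points you flag at the end: the chosen $w\times w$ submatrix is exactly the scaled Vandermonde matrix, and the scalars $\lambda^{b i_\ell}$ are units, so the bound is indeed independent of where the run of consecutive elements begins. The only thing your write-up adds beyond the paper is self-containedness, which is a virtue here since the paper leans on this proposition in Theorem~\ref{th:3.5} to certify the MDS property of the underlying cyclic codes.
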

   
    Let the conjugation transpose of an $m\times n$ matrix $H=(x_{i,j})$ entries in $\mathbb{F}_{q^2}$ is an $n \times m$ matrix $H^\dag=({x_{j,i}^q})$.
 	According to literatures \cite{ref4,ref28}, EAQECCs can be constructed from arbitrary classical linear codes over $\mathbb{F}_{q^2}$, which is given by the following proposition.
 	
    \begin{proposition}\label{pro:2.2}
    If $\mathcal{C}$ is an $[n,k,d]_{q^2}$ classical code and $H$ is its parity check matrix over $\mathbb{F}_{q^2}$, then there exist entanglement-assisted quantum codes with  parameters $[[n,2k-n, d; c]]_q$, where $c = rank(HH^\dag)$. 
    \end{proposition}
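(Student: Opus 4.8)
The plan is to build the desired code as an entanglement-assisted stabilizer code attached to the Hermitian structure of $\mathcal{C}$, so the first task is to translate the $\mathbb{F}_{q^2}$-linear, Hermitian setting into the symplectic (Pauli) setting over $\mathbb{F}_q$. I would fix an $\mathbb{F}_q$-basis of $\mathbb{F}_{q^2}$ and define an $\mathbb{F}_q$-linear map $\phi:\mathbb{F}_{q^2}^{n}\to\mathbb{F}_q^{2n}$ sending each vector to a generalized Pauli operator on $n$ qudits, chosen so that $\phi$ is an isometry from the trace-Hermitian form $\mathrm{Tr}_{q^2/q}\langle\cdot,\cdot\rangle_h$ to the standard symplectic form; under this identification the commutator of $\phi(u)$ and $\phi(v)$ is governed by $\langle u,v\rangle_h$. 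Applying $\phi$ to the $n-k$ rows of $H$ then yields a family of Pauli operators whose pairwise commutation data is recorded exactly by the entries of $HH^\dag$.

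Next I would analyse this commutation data. Writing $V$ for the $\mathbb{F}_{q^2}$-row space of $H$, the Hermitian form restricted to $V$ has rank $c=\mathrm{rank}(HH^\dag)$, so its radical $V\cap V^{\perp_h}$ has $\mathbb{F}_{q^2}$-dimension $(n-k)-c$. Carrying this through $\phi$, the image is an $\mathbb{F}_q$-space of dimension $2(n-k)$ on which the symplectic form has rank $2c$. I would then run a symplectic Gram--Schmidt reduction to split the lifted generators into a maximal mutually commuting (isotropic) subset of size $2(n-k)-2c$ together with exactly $c$ hyperbolic pairs of operators that fail to commute.

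The isotropic generators are taken as the stabilizer, and each of the $c$ hyperbolic pairs is repaired by extending its two operators onto one fresh half of a maximally entangled pair shared in advance with the receiver, which restores global commutativity; this is precisely what forces $c=\mathrm{rank}(HH^\dag)$ pre-shared ebits. The standard entanglement-assisted accounting then counts $n-\bigl(2(n-k)-2c\bigr)-c=2k-n+c$ encoded qudits alongside the $c$ ebits. For the distance, I would use that an undetectable logical error must be Hermitian-orthogonal to every row of $H$, hence must lie in $(\mathcal{C}^{\perp_h})^{\perp_h}=\mathcal{C}$; since every nonzero codeword of $\mathcal{C}$ has weight at least $d$, the quantum minimum distance is bounded below by $d$.

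The step I expect to be the main obstacle is the second one together with its interface to the first: verifying that $\phi$ is genuinely an isometry onto the symplectic form, and that the symplectic rank of the lifted row space is exactly $2c$ (equivalently, that the number of hyperbolic pairs produced by Gram--Schmidt equals $\mathrm{rank}(HH^\dag)$ rather than merely being bounded by it), all while keeping the $\mathbb{F}_{q^2}$-versus-$\mathbb{F}_q$ dimension count consistent. Once the isometry is established, the decomposition, the ebit count, and the distance bound all follow from routine linear algebra.
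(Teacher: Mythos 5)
Note first that the paper contains no proof of this proposition: it is imported as a known result from \cite{ref4,ref28}, so there is no internal argument to compare yours against. Your sketch is, in outline, exactly the standard construction of Wilde and Brun \cite{ref28}: lift the rows of $H$ through a trace-symplectic isometry, run symplectic Gram--Schmidt on the lifted row space, keep the isotropic part as the stabilizer, and spend one ebit per hyperbolic pair. The two points you flag as delicate do go through, for a reason worth recording: because the row space $V$ of $H$ is an $\mathbb{F}_{q^2}$-linear space, an element of $V$ lies in the radical of the trace-alternating form on the lift of $V$ if and only if it lies in the Hermitian radical $V\cap V^{\perp_h}$ (combine non-degeneracy of the trace form with the fact that $V$ is closed under $\mathbb{F}_{q^2}$-scaling, so vanishing of the trace pairing against all of $V$ forces vanishing of the Hermitian pairing itself). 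Hence the symplectic rank of the lift is exactly $2\,\mathrm{rank}(HH^\dag)=2c$, and the number of hyperbolic pairs is exactly $c$, not merely bounded by it.

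There is, however, one discrepancy you should not try to "fix": your accounting yields $2k-n+c$ encoded qudits, while the proposition as printed claims $2k-n$. Your derivation is the correct one; the printed statement is missing the $+c$, and this is visible inside the paper itself. In Theorem \ref{th:3.5} the classical cyclic code has dimension $n-2(\alpha q+mk)$ and $c=4\alpha(a\alpha+m)$, and the claimed quantum dimension $n-4\alpha(q-m-a\alpha)-4mk$ equals $2\bigl[n-2(\alpha q+mk)\bigr]-n+c$, not $2\bigl[n-2(\alpha q+mk)\bigr]-n$; likewise the entry $[[85,33,33;12]]_{13}$ of Table I only checks out with the $+c$ included. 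So the paper silently applies the correct formula $[[n,2k-n+c,d;c]]_q$ even though it states the other one. A last, minor convention point: whether undetectable errors land in $(\mathcal{C}^{\perp_h})^{\perp_h}=\mathcal{C}$, as you write, or in the Frobenius image of $\mathcal{C}$, depends on whether $H$ is read as a Hermitian or a Euclidean parity-check matrix; either way the weight distribution is unchanged, so your distance bound stands.
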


   Because it is not easy to determine the number of entangled states $c$ by computing the rank of $HH^\dag$, there are scholars put forward the concept of decomposing the defining set of $\mathcal{C}$ as follows.
 	
    \begin{definition}\label{def:2.3}
	\emph{Let $\mathcal{C}$ be a $q^2$-ary cyclic code of length $n$ with defining set $Z$. Assume
	that $Z_1 = Z \bigcap (-qZ)$ and $Z_2 = Z\backslash Z_1$, where $-qZ = \{n - qx~|~x \in Z\}$. Then,
	$Z = Z_1 \bigcup Z_2$ is called a decomposition of the defining set of $\mathcal{C}$.}
    \end{definition} 

 After decomposing the defining set of $\mathcal{C}$, there is the following lemma which give a relative easy method to calculate the number of entangled states $c$.

    \begin{lemma}\label{le:2.4}
	Let $\mathcal{C}$ be a cyclic code with length n over  $\mathbb{F}_{q^2}$, where $\gcd(n, q) = 1$.
	Suppose that $Z$ is the defining set of the cyclic code $\mathcal{C}$ and $Z = Z_1 \bigcup Z_2$ is a
	decomposition of $Z$. Then, the number of entangled states required is $c = |Z_1|$.	
    \end{lemma}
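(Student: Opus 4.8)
The plan is to combine Proposition \ref{pro:2.2} with a direct computation of $\mathrm{rank}(HH^\dag)$: since that proposition already gives $c=\mathrm{rank}(HH^\dag)$, it suffices to prove $\mathrm{rank}(HH^\dag)=|Z_1|$ for a conveniently chosen full-rank parity-check matrix $H$. First I would note that this rank is intrinsic to the code: if $H'=SH$ with $S$ invertible, then $H'H'^\dag=S(HH^\dag)S^\dag$, and since $S$ and $S^\dag$ are invertible this leaves the rank unchanged. This freedom (together with the invariance of rank under field extension) lets me replace an $\mathbb{F}_{q^2}$-matrix by the more tractable ``root-of-unity'' form over an extension field, namely the $|Z|\times n$ matrix $H$ with rows indexed by $i\in Z$ and entries $H_{i,j}=\lambda^{ij}$ for $0\le j\le n-1$, where $\lambda$ is the primitive $n$-th root of unity used to define the defining set.

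Next I would compute the entries of $HH^\dag$. By the definition of the conjugate transpose, $(H^\dag)_{j,i'}=(H_{i',j})^{q}=\lambda^{qi'j}$, so
\[
(HH^\dag)_{i,i'}=\sum_{j=0}^{n-1}\lambda^{ij}\lambda^{qi'j}=\sum_{j=0}^{n-1}\lambda^{(i+qi')j}.
\]
Here the Hermitian $q$-th power conjugation is exactly what introduces the factor $q$, and hence the shifted set $-qZ$, into the picture. Because $\gcd(n,q)=1$, the integer $n$ is nonzero in $\mathbb{F}_{q^2}$ and $\lambda$ is a genuine primitive $n$-th root of unity, so the orthogonality relation $\sum_{j=0}^{n-1}\lambda^{kj}=n$ when $n\mid k$ and $0$ otherwise applies. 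Thus $(HH^\dag)_{i,i'}=n$ precisely when $i\equiv-qi'\pmod n$, and $(HH^\dag)_{i,i'}=0$ otherwise.

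It then remains to read the rank off this pattern. A position $(i,i')$ with $i,i'\in Z$ carries a nonzero entry iff $i=-qi'\bmod n$; such an $i$ lies in $Z\cap(-qZ)=Z_1$, and conversely every $i\in Z_1$ equals $-qi'$ for a \emph{unique} $i'=-q^{-1}i\in Z$. Hence the nonzero entries are in bijection with $Z_1$, and they occupy pairwise distinct rows and pairwise distinct columns (two entries sharing a column $i'$ would both equal $-qi'$, forcing the same row, and symmetrically for rows). A matrix whose nonzero entries lie in distinct rows and distinct columns is a scaled partial permutation matrix, so its rank equals the number of nonzero entries, namely $|Z_1|$. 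Combining this with Proposition \ref{pro:2.2} yields $c=\mathrm{rank}(HH^\dag)=|Z_1|$.

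The step I expect to be the main obstacle is justifying the passage to the root-of-unity matrix: one must be certain that $\mathrm{rank}(HH^\dag)$ is truly intrinsic to $\mathcal{C}$, invariant both under a change of full-rank parity-check matrix and under the field extension needed to accommodate $\lambda$, since Proposition \ref{pro:2.2} a priori refers to a specific matrix over $\mathbb{F}_{q^2}$. Once that invariance is secured, the rest—the orthogonality evaluation and the distinct-rows/distinct-columns bookkeeping—is routine.
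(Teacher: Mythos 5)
Your proposal is correct, but note that the paper itself gives \emph{no} proof of Lemma \ref{le:2.4} at all: it is quoted as a known result from the decomposing-the-defining-set literature (references [13], [14] of the paper), so there is no internal argument to compare against. Your argument supplies exactly the missing content, and it is the standard one: take the full-rank ``evaluation'' parity check matrix $H=(\lambda^{ij})_{i\in Z,\,0\le j\le n-1}$, compute $(HH^\dag)_{i,i'}=\sum_{j}\lambda^{(i+qi')j}$, invoke orthogonality of root-of-unity sums (valid since $\gcd(n,q)=1$ gives $p\nmid n$, so $n\neq 0$ in the field and $\lambda$ exists), and observe that the nonzero entries form a partial permutation pattern indexed by $Z_1=Z\cap(-qZ)$, whence $\mathrm{rank}(HH^\dag)=|Z_1|$; Proposition \ref{pro:2.2} then gives $c=|Z_1|$. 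Your computation correctly exposes how the Hermitian conjugation $x\mapsto x^q$ is what produces the set $-qZ$ of Definition \ref{def:2.3}. The one step you flag but do not fully discharge --- that $\mathrm{rank}(HH^\dag)$ is intrinsic, i.e.\ that your root-of-unity matrix really is $SH_0$ for an invertible $S$, where $H_0$ is an $\mathbb{F}_{q^2}$ parity check matrix --- is routine: both matrices are full rank of size $(n-k)\times n$ and have the same null space (namely the scalar extension of $\mathcal{C}$), hence the same row space over the extension field, so they differ by an invertible factor; combined with $(SH)^\dag=H^\dag S^\dag$ (the map $x\mapsto x^q$ is a ring homomorphism in characteristic $p$) and the invariance of rank under field extension, the reduction is legitimate. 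So your proof is complete in all essentials and makes the paper self-contained at a point where it currently relies on citation alone.
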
   
 \section{Construction of EAQMDS Codes }

In this section, we devote to derive four new classes of EAQMDS codes from cyclic codes over $\mathbb{F}_{q^2}$. Let $q$ be an odd prime power with $a|(q^2+1)$, where $a$ is even. In this case, we always assume $n=(q^2+1)/a$, where $a=m^2+1$ ($m\geq1$~is~odd), and we consider cyclic codes of length $n$ over $\mathbb{F}_{q^2}$. By Lemma 3.1 in \cite{ref29}, we have the following results directly.

\begin{lemma}\label{le:3.1}
	Let $n=(q^2+1)/a$, where $a=m^2+1$ $(m~is~odd)$, then all cyclotomic cosets modulo $n$ containing $i$ are as follows:
	 $$C_i=\{i,-i\}=\{i,n-i\},$$ for $1\leq i\leq n-1$.
\end{lemma} 

 Next, we give a useful lemma which will be used in later constructions.

\begin{lemma}\label{le:3.2}
	Let $n=(q^2+1)/a,~a=m^2+1 ~(m\geq1~is~odd)$ and $s=(n-1)/2$, $q$ be an odd prime power with the form $a|(q\pm m)$. Let $0\leq l\leq (m-3)/2 $, when $m\geq 3$ and $l=0$, when $m=1$. Then we have the following results in four cases bellow:
	$$-qC_{uq+v}={C_{vq-u}}.$$
	1) When $a|(q-m)$ and $q=2ak+m$, then $1\leq v\leq mk $, if $0\leq u \leq k$, or $mk+1\leq v\leq 2mk $, $2(m+1)k+l(2mk+1)+2\leq v\leq (3m+1)k+l(2mk+1)+1$, $q-l(2mk+1)-(m+1)k \leq v\leq q-l(2mk+1)-(2k+1)$, if $ 0\leq u \leq k-1$.\\\\
	2) When $a|(q-m)$ and $q=2ak+a+m$, then $1\leq v\leq (2k+1)m $, $l(2mk+m+1)+(m+1)(2k+1)+2\leq v\leq l(2mk+m+1)+(2k+1)(3m+1)/2+1$, if $0\leq u \leq k$, or $q-l(2mk+m+1)-(2k+1)(m+1)/2\leq v\leq q-l(2mk+m+1)-2(k+1)$, if $ 0\leq u \leq k-1$.\\\\
	3) When $a|(q+m)$ and $q=2ak+a-m$, then $l(2mk+m-1)+2k+1\leq v\leq l(2mk+m-1)+(2k+1)(m+1)/2-1$, if $0\leq u \leq k$, or $q-l(2mk+m-1)-(2k+1)(3m+1)/2+2\leq v\leq q-l(2mk+m-1)-(m+1)(2k+1)+1$, if $ 0\leq u \leq k-1$.\\\\
	4) When $a|(q+m)$ and $q=2ak+2a-m$, then $l[2m(k+1)-1]+2k+2\leq v\leq [2m(k+1)-1]l+m(k+1)+k $, $q-l[2m(k+1)-1]-3m(k+1)-k+1 \leq v \leq q-l[2m(k+1)-1]-(2m+1)(k+1)-k$, $q-2m(k+1)+1\leq v\leq q-m(k+1)$ if $0\leq u \leq k$, or $q-m(k+1)+1\leq v\leq q $ if $ 0\leq u \leq k-1$.
\end{lemma}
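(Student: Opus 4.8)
The plan is to verify the key identity $-qC_{uq+v} = C_{vq-u}$ first, and then check that each of the stated index ranges in the four cases yields a valid element $vq-u$ whose cyclotomic coset lands in the image of $-q$ applied to the defining set under consideration. I would begin from Lemma~\ref{le:3.1}, which tells us that every nontrivial cyclotomic coset modulo $n$ has the simple form $C_i = \{i, n-i\}$, since $q^2 \equiv -1 \pmod{n}$ on this length $n = (q^2+1)/a$. The core computation is therefore to evaluate $-q(uq+v) \bmod n$. Using $q^2 \equiv -1 \pmod n$, I compute $-q(uq+v) = -uq^2 - vq \equiv u - vq \equiv -(vq - u) \pmod n$, so $-q(uq+v) \equiv -(vq-u)$, and since $C_{vq-u} = \{vq-u,\, -(vq-u)\}$ by Lemma~\ref{le:3.1}, the identity $-qC_{uq+v} = C_{vq-u}$ follows immediately. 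This reduces the whole lemma to a bookkeeping problem about representatives.

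The substance of the lemma is then to confirm that for each case the listed ranges of $v$ (and the companion bound on $u$) describe exactly the indices for which $uq+v$ lies in the relevant part of the chosen defining set, so that $-qC_{uq+v}$ sweeps out the claimed cosets. For each of the four cases I would substitute the stated form of $q$, namely $q = 2ak+m$, $q=2ak+a+m$, $q=2ak+a-m$, or $q=2ak+2a-m$, into the residue computation and use the divisibility hypothesis $a \mid (q\mp m)$ together with $a = m^2+1$ to reduce the expressions modulo $n$. The parameter $l$ ranging over $0 \le l \le (m-3)/2$ (with $l=0$ forced when $m=1$) indexes successive ``blocks'' of consecutive residues, and I would show inductively or block-by-block that as $l$ increments, the interval of admissible $v$ shifts by the stated step (e.g. $2mk+1$ in case~1), keeping the residues $vq-u$ within $\{0,1,\dots,n-1\}$ and non-overlapping. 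The split between $0 \le u \le k$ and $0 \le u \le k-1$ reflects whether the upper end of the $v$-interval pushes $uq+v$ past a multiple of $n$; I would verify the boundary $v$-values in each subcase to pin down precisely where the $u$-range must shrink.

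The main obstacle I anticipate is purely the combinatorial control of the ranges: ensuring that the intervals of $v$ given for each $l$ are genuinely disjoint, that together with the $u$-bounds they cover the intended residues without gaps or repetition, and that none of the representatives $vq-u$ wraps around $n$ in an unintended way. This requires carefully tracking the size $n = (q^2+1)/a$ against products like $vq$ when $v$ is near its maximum and $u$ is near $k$, and checking that the reductions modulo $n$ behave as claimed at every interval endpoint. I would organize this by treating each of the four cases separately, within each case verifying the generic-$l$ interval and then the two endpoint regimes ($u \le k$ versus $u \le k-1$) explicitly; the algebra is routine modular arithmetic, but the case analysis is delicate and is where errors would most easily creep in. Once the ranges are confirmed to be the exact preimages, the coset identity from the first step delivers $-qC_{uq+v} = C_{vq-u}$ on the whole stated domain, completing the proof.
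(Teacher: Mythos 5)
Your proposal is correct and takes essentially the same route as the paper: the paper's entire proof consists of the observation that $C_{uq+v}=\{uq+v,-(uq+v)\}$ (Lemma \ref{le:3.1}) together with the computation $-q\cdot(-(uq+v))=uq^2+vq=u(q^2+1)+vq-u\equiv vq-u \pmod{n}$, which is exactly the modular identity you derive from $q^2\equiv -1 \pmod{n}$, with cases 2)--4) declared analogous. The extensive block-by-block range verification you outline (but do not execute) in your later paragraphs is not actually required for the stated coset identity and is likewise absent from the paper's proof, so its omission is not a gap.
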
 

\begin{proof} 
	1) Note that  $C_{uq+v}=\{uq+v,-(uq+v)\}$  for $0\leq v\leq mk $, if $0\leq u \leq k$, or
    $mk+1\leq v\leq 2mk $, $2(m+1)k+l(2mk+1)+2\leq v\leq (3m+1)k+l(2mk+1)+1$, $q-l(2mk+1)-(m+1)k \leq v\leq q-l(2mk+1)-(2k+1)$, if $ 0\leq u \leq k-1$, where $0\leq l\leq (m-3)/2$, when $m\geq 3$ and $l=0$, when $m=1$.
	
	Since $-q\cdot(-(uq+v))=uq^2+vq=u(q^2+1)+vq-u\equiv vq-u~ mod~n$. This gives that $-qC_{uq+v}={C_{vq-u}}$.
	
	The proofs of 2), 3) and 4) are similar to case 1), so we omit it here.

\end{proof}
    \ 
    
    From Lemma \ref{le:3.2}, we can also obtain $-qC_{jq-t}={C_{tq+j}}$ in each case. Where the range of $q$, $t$ and $j$ is below:
    
    1) When $a|(q-m)$ and $q=2ak+m$, let $0\leq l\leq (m-3)/2$, if $m\geq 3$ and $l=0$, if $m=1$, then $-qC_{j q-t}={C_{tq+j}}$, $0\leq j\leq mk$, if $0\leq t \leq k$, or $mk+1\leq j\leq 2mk$, $2(m+1)k+l(2mk+1)+2\leq j\leq (3m+1)k+l(2mk+1)+1$, $q-l(2mk+1)-(m+1)k \leq j\leq q-l(2mk+1)-(2k+1)$, if $0\leq t \leq k-1$.
    
    In the case 2), 3) and 4), we have similar results.
    
    Based on the discussions above, we can give the first construction as follows.\\\\
\noindent\textbf{Case \uppercase\expandafter{\romannumeral 1} $~~~~\bm {q=2ak+m} $}\\
    
    In order to obtain the number of entangled states $c$, we give the following lemma for preparation.

\begin{lemma}\label{le:3.3}
	Let $n=(q^2+1)/a,~a=m^2+1 ~(m\geq1~is~odd)$, $s=(n-1)/2$ and $q=2ak+m~(k\geq 1)$ be an odd prime power. For a positive integer $1\leq \alpha\leq k$, let 
	\\$$T_{1}=
	\bigcup_{\substack{s+(m+t)k+h+\alpha \leq v \leq s+ (m+t+2)k+(h-1)-\alpha,\\if~ v\leq s+mk,~0 \leq u\leq\alpha,~else~ 0 \leq u\leq\alpha-1\\-m\leq t\leq (2m-1)m~and~t~is~odd}}C_{uq+v}$$
	$$when~ -m\leq t \leq -1,~h=1.$$
	$$when~ 1\leq t\leq 2m-1,~h=2.$$
	$$when~ 2m+1 \leq t \leq 4m-1,~h=3.$$
	$$\dots\dots$$
	$$when~ 2(m-1)m+1 \leq t \leq (2m-1)m,~h=m+1.$$
	Then $T_1\bigcap-qT_1=\emptyset$.
	
\end{lemma}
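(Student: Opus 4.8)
The plan is to reduce everything to the universal identity $-qC_{uq+v}=C_{vq-u}$. This is pure arithmetic modulo $n$: since $q^2\equiv -1\pmod n$, one has $-q(uq+v)\equiv u-vq$ and $q(uq+v)\equiv vq-u$, so $-qC_{uq+v}=\{u-vq,\,vq-u\}=C_{vq-u}$ for every pair $(u,v)$, not merely those tabulated in Lemma~\ref{le:3.2}. Applying this termwise yields $-qT_1=\bigcup C_{vq-u}$, the union running over the same index set that defines $T_1$. By Lemma~\ref{le:3.1} every coset has the form $C_x=\{x,n-x\}$, so two cosets intersect exactly when their indices are congruent up to sign. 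Therefore proving $T_1\cap -qT_1=\emptyset$ amounts to showing
$$u_1q+v_1\not\equiv\pm(v_2q-u_2)\pmod n$$
for all admissible pairs $(u_1,v_1),(u_2,v_2)$ in the index set.

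First I would locate the representatives of the cosets making up $T_1$. Writing $n=4ak^2+4km+1$ and $s=2ak^2+2km$, and using the dichotomy $u\le\alpha$ when $v\le s+mk$ versus $u\le\alpha-1$ otherwise, a direct estimate gives $s<uq+v<n$ for every index pair; thus no reduction modulo $n$ is needed, and the representatives of $T_1$ fill a prescribed family of sub-intervals of $(s,n)$, grouped into one block for each odd value of $t$ with the block offset controlled by the auxiliary index $h$.

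Next I would compute the representatives of $-qT_1$ by reducing $vq-u$ modulo $n$. The key input is $sq\equiv 2amk^2+2km^2-ak\pmod n$ (equivalently $2sq\equiv -q$), which transports the band of $v$-values near $s$ onto a second, explicitly describable union of intervals. The core of the proof is then to place the two interval families side by side and check that they are disjoint: for each parity block of $t$ one verifies that the images $vq-u$ and their negatives land strictly between consecutive intervals of $T_1$, so that neither $u_1q+v_1\equiv v_2q-u_2$ nor $u_1q+v_1\equiv -(v_2q-u_2)$ can hold.

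The main obstacle is exactly this bookkeeping. Two features make it delicate. First, the reduction of $vq-u$ wraps around zero when $v$ is close to $s$: since $sq\bmod n$ is comparable to $m/2$ of the modulus, the right-hand side $2amk^2+2km^2-ak$ exceeds $n$ as soon as $m\ge 3$ and must itself be reduced, so the image band is cut into several pieces. Second, the shift $h$, which advances by one each time $t$ increases through a window of width $2m$, is calibrated precisely so that these wrapped image pieces fall into the gaps of $T_1$. Matching the $h$-indexed blocks against the wrapped images, while respecting the $\alpha$-dependent endpoints and the $u\le\alpha$ versus $u\le\alpha-1$ split, is the only genuinely intricate step; once all interval endpoints are tabulated, disjointness is a finite comparison.
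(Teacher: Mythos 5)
Your setup is sound and, in spirit, matches the paper's: both start from the identity $-qC_{uq+v}=C_{vq-u}$ (the paper invokes its Lemma~\ref{le:3.2} for exactly this), both use the fact that cosets have the form $C_x=\{x,n-x\}$ to reduce disjointness to the non-congruence $u_1q+v_1\not\equiv\pm(v_2q-u_2)\pmod n$, and your preliminary computations check out: with $n=4ak^2+4km+1$ and $s=2ak^2+2km$ one does have $s<uq+v<n$ for every index pair of $T_1$, and indeed $sq\equiv 2amk^2+2km^2-ak\pmod n$, a quantity that exceeds $n$ once $m\ge 3$. You even flag a subtlety that the paper's own terse proof glosses over, namely that the representatives $vq-u$ of $-qT_1$ must themselves be reduced modulo $n$ before any comparison is meaningful.

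Nevertheless there is a genuine gap: the verification that the lemma actually asserts is never carried out. The entire content of Lemma~\ref{le:3.3} is the disjointness of two explicit coset families, and your proposal stops exactly where that work begins --- ``once all interval endpoints are tabulated, disjointness is a finite comparison.'' You never tabulate the reduced intervals of $-qT_1$, never match them against the $T_1$ intervals, and the one sentence addressing the crux --- that the shift $h$ ``is calibrated precisely so that these wrapped image pieces fall into the gaps of $T_1$'' --- is an assertion of the conclusion, not an argument for it. This is precisely where the specific design of $T_1$ must enter (only odd $t$, the $h$-offset advancing by one per window of width $2m$ in $t$, the $\pm\alpha$ trimming of endpoints, and the split $u\le\alpha$ versus $u\le\alpha-1$ at $v=s+mk$); a proof that never engages these data has not proved anything about this particular $T_1$. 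By contrast the paper's proof, terse as its final step is, at least derives block by block (for $t_1,\dots,t_{m+2}$) explicit upper bounds on the representatives $u_iq+v_i$ and lower bounds on $v_jq-u_j$ and then compares them; your plan requires strictly more computation than the paper's (a full modular reduction of each image band), yet none of it is present. To close the gap you would need to write down, for each $h$-block, the reduced values $(vq-u)\bmod n$ and their negatives modulo $n$, and exhibit for each the two consecutive $T_1$-intervals it falls strictly between.
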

\begin{proof}
	For a positive integer $\alpha$ with $1 \leq \alpha \leq k$, let 
	\\$$T_{1}=
	\bigcup_{\substack{s+(m+t)k+h+\alpha \leq v \leq s+ (m+t+2)k+(h-1)-\alpha,\\if~ v\leq s+mk,~0 \leq u\leq\alpha,~else~ 0 \leq u\leq\alpha-1\\-m\leq t\leq (2m-1)m~and~t~is~odd}}C_{uq+v}$$
	$$when~ -m\leq t \leq -1,~h=1.$$
	$$when~ 1\leq t\leq 2m-1,~h=2.$$
	$$when~ 2m+1 \leq t \leq 4m-1,~h=3.$$
	$$\dots\dots$$
	$$when~ (2m-2)m+1 \leq t \leq (2m-1)m,~h=m+1.$$
	Then by Lemma \ref{le:3.2}, we have
    \\$$-qT_{1}=
    \bigcup_{\substack{s+(m+t)k+h+\alpha \leq v \leq s+ (m+t+2)k+(h-1)-\alpha,\\if~ v\leq s+mk,~0 \leq u\leq\alpha,~else~ 0 \leq u\leq\alpha-1\\-m\leq t\leq (2m-1)m~and~t~is~odd}}C_{vq-u}$$
    $$when~ -m\leq t \leq -1,~h=1.$$
    $$when~ 1\leq t\leq 2m-1,~h=2.$$
    $$when~ 2m+1 \leq t \leq 4m-1,~h=3.$$
    $$\dots\dots$$
    $$when~ (2m-2)m+1 \leq t \leq (2m-1)m,~h=m+1.$$
	When $-m\leq t_1 \leq -1,~h_1=1$, then $s+1+\alpha \leq v_1\leq s+mk$ and $0\leq u_1\leq \alpha$, it follows that
	$$u_1q+v_1\leq \alpha q+s+mk,~(s+1+\alpha)q-\alpha \leq v_1q-u_1.$$
	When $-m\leq t_2 \leq -1,~h_2=1$, then $s+mk+1 \leq v_2\leq s+(m+1)k-\alpha$ and $0\leq u_2\leq \alpha-1$, it follows that
	$$u_2q+v_2\leq (\alpha-1)q+s+(m+1)k-\alpha,~(s+mk+1)q-\alpha+1 \leq v_2q-u_2.$$
	When $1\leq t_3 \leq 2m-1,~h_3=2$, then $s+(m+1)k+2+\alpha \leq v_3\leq s+(3m+1)k-\alpha+1$ and $0\leq u_3\leq \alpha-1$, it follows that
    $$u_3q+v_3\leq (\alpha-1)q+s+(3m+1)k-\alpha+1,~[s+(m+1)k+2+\alpha]q-\alpha+1 \leq v_3q-u_3.$$
	When $2m+1\leq t_4 \leq 4m-1,~h_4=3$, then $s+(3m+1)k+3+\alpha \leq v_4\leq s+(5m+1)k+2-\alpha$ and $0\leq u_4\leq \alpha-1$, it follows that
    $$u_4q+v_4\leq (\alpha-1)q+s+(5m+1)k+2-\alpha,~[s+(3m+1)k+3+\alpha]q-\alpha+1 \leq v_4q-u_4.$$
    $$\dots\dots$$
    When $2(m-1)m+1\leq t_{m+2} \leq (2m-1)m,~h_{m+2}=m+1$, then $s+[(2m-1)m+1]k+m+1+\alpha \leq v_{m+2}\leq s+2(m^2+1)k+m-\alpha$ and $0\leq u_{m+2}\leq \alpha-1$, it follows that
    $$u_{m+2}q+v_{m+2}\leq (\alpha-1)q+s+2(m^2+1)k+m-\alpha,$$ $$[s+(2m^2-m+1)k+m+1+\alpha ]q-\alpha+1 \leq v_{m+2}q-u_{m+2}.$$\\
	It is easy to check that\\
	$u_1q+v_1<v_1q-u_1,~u_1q+v_1<v_2q-u_2,~\dots\dots~,~u_1q+v_1<v_{m+2}q+u_{m+2}.$
	$u_2q+v_2<v_1q-u_1,~u_2q+v_2<v_2q-u_2,~\dots\dots~,~u_2q+v_2<v_{m+2}q+u_{m+2}.$
	$$\dots\dots$$
	$u_{m+2}q+v_{m+2}<v_1q-u_1,~\dots\dots~,~u_{m+2}q+v_{m+2}<v_{m+2}q+u_{m+2}.$
	
	For the range of $~v_1,~v_2,~\dots\dots~,~v_{m+2}$ and $~u_1,~u_2,~\dots\dots~,~u_{m+2}$, note that
	$u_iq+v_i\leq (m+q)k~(i=1,2,\dots\dots,m+2) $, the subscripts of $C_{u_i+v_i}$ is the biggest number in the set. 
	Then $T_1\bigcap-qT_1=\emptyset$. The desired results follows.
	
\end{proof}
\
    
    Based on Lemma \ref{le:3.3}, we can determine the number of entangled states $c$ in the following theorem.
    
\begin{theorem}\label{th:3.4}
    Let $n=(q^2+1)/a$, $a=m^2+1$ $(m\geq 1~is~odd)$ and $q=2ak+m$ $(k\geq 1)$ be an odd prime power. For a positive integer $\alpha$ with $1\leq \alpha\leq k$, 
	let $\mathcal{C}$ be a cyclic code with defining set $Z$ given as follows 
	$$Z=C_{s+1}\bigcup C_{s+2}\bigcup\dots\bigcup C_{s+(\alpha q+mk)}.$$
	Then $|Z_{1}|=4\alpha(a\alpha +m)$.
\end{theorem}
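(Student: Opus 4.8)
The plan is to evaluate $|Z_1|$ for the decomposition $Z=Z_1\cup Z_2$ of Definition \ref{def:2.3}, with $Z_1=Z\cap(-qZ)$, and then read off $c=|Z_1|$ from Lemma \ref{le:2.4}. First I would record the structural facts that let the argument proceed coset by coset. Since $n\mid q^2+1$ we have $(-q)^2\equiv-1\pmod n$, and by Lemma \ref{le:3.1} each coset $C_i=\{i,-i\}$ is closed under negation; hence $x\mapsto-qx$ is an involution on the set of cosets and $Z$, $-qZ$, $Z_1$ are all unions of cosets. In particular a coset $C$ lies in $Z_1$ precisely when $C\subseteq Z$ and $-qC\subseteq Z$. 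I would also note that, writing $s=(n-1)/2=2ak^2+2mk$ and $N=\alpha q+mk$, the bound $N\le kq+mk=s$ shows that $Z=\{s+1-N,\dots,s+N\}\subseteq\{1,\dots,n-1\}$, so every coset meeting $Z$ has size exactly $2$.

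Next I would reduce the count to the set $T_1$ of Lemma \ref{le:3.3}. Rewriting the generators $C_{s+1},\dots,C_{s+\alpha q+mk}$ in the form $C_{uq+v}$ and applying $-qC_{uq+v}=C_{vq-u}$ from Lemma \ref{le:3.2}, I would argue that the cosets of $Z$ whose $-q$-image leaves $Z$ are exactly those comprising $T_1$, i.e. $Z_2=Z\setminus Z_1=T_1$. The inclusion $T_1\subseteq Z_2$ is Lemma \ref{le:3.3} strengthened so that $-qT_1$ avoids all of $Z$ and not merely $T_1$: the range estimates in the proof of Lemma \ref{le:3.3} already place every representative $u_iq+v_i$ of $T_1$ in $[s+1+\alpha,s+N]$ while forcing the reduced image representatives $v_iq-u_i$ outside $[s+1-N,s+N]$. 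For the reverse inclusion, the same estimates show that every coset of $Z$ not in $T_1$ satisfies $-qC\subseteq Z$, so $Z\setminus T_1\subseteq Z_1$. Together these give $Z_1=Z\setminus T_1$ and hence $|Z_1|=|Z|-|T_1|$.

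Finally I would carry out the count. Here $|Z|=2(\alpha q+mk)$, since $Z$ consists of $\alpha q+mk$ cosets of size $2$. For $|T_1|$, the parameter $t$ runs over the $a=m^2+1$ odd integers in $[-m,(2m-1)m]$; for each such $t$ the admissible $v$ fill an interval of length $2(k-\alpha)$, and $u$ ranges over $\alpha$ values, increased to $\alpha+1$ only on the sub-band $v\le s+mk$ that occurs for $t<0$. Summing the resulting $(u,v)$-pairs gives $(k-\alpha)(2a\alpha+m)$ cosets, whence $|T_1|=2(k-\alpha)(2a\alpha+m)$. Substituting $q=2ak+m$ then yields $|Z_1|=2(\alpha q+mk)-2(k-\alpha)(2a\alpha+m)=4\alpha(a\alpha+m)$, as claimed.

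The hard part will be the equality $Z_2=T_1$ rather than the closing arithmetic: one must verify simultaneously that every coset of $T_1$ exits $Z$ under $-q$ and that every remaining coset of $Z$ returns, which forces a careful case analysis over the piecewise $(t,h)$-regions of Lemma \ref{le:3.3} and over the boundary $v=s+mk$ where the range of $u$ changes. I must also confirm that the listed cosets $C_{uq+v}$ are pairwise distinct and of size two, so that counting $(u,v)$-pairs and doubling is legitimate; once the regions and this distinctness are secured, the simplification to $4\alpha(a\alpha+m)$ is routine.
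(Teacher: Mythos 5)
Your overall skeleton is the right one, and your arithmetic is correct: $Z=\{s+1-N,\dots,s+N\}$ with $N=\alpha q+mk\le s$, every coset has size two, $|Z|=2(\alpha q+mk)$, and the coset count of $T_1$ (with $t$ running over the $a$ odd values, $2(k-\alpha)$ values of $v$ per $t$, and the $u$-range enlarged on the sub-band $v\le s+mk$, which straddles the interval at $t=-1$) is indeed $(k-\alpha)(2a\alpha+m)$, so that $2(\alpha q+mk)-2(k-\alpha)(2a\alpha+m)=4\alpha(a\alpha+m)$. The subtraction $|Z_1|=|Z|-|T_1|$ is a legitimate (and arguably cleaner) alternative to what the paper does, which is to count the complement directly.

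The genuine gap is the sentence ``the same estimates show that every coset of $Z$ not in $T_1$ satisfies $-qC\subseteq Z$.'' The estimates in the proof of Lemma \ref{le:3.3} concern only the cosets $C_{u_iq+v_i}$ belonging to $T_1$ and their images $C_{v_iq-u_i}$; they say nothing whatsoever about the cosets of $Z\setminus T_1$, so the reverse inclusion $Z\setminus T_1\subseteq Z_1$ cannot be extracted from them. This inclusion is the actual content of the theorem, and it is exactly where the paper spends its effort: the paper writes $Z\setminus T_1$ as an explicit set $T_1'$, a union of six families of cosets $C_{uq+v}$ with concrete index ranges, applies Lemma \ref{le:3.2} to get $-qT_1'=\bigcup C_{vq-u}$, and then verifies the invariance $-qT_1'=T_1'$, from which $Z_1=Z\cap(-qZ)=T_1'$ follows. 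Note also that once this invariance is in hand, your forward inclusion $T_1\subseteq Z_2$ needs no separate ``strengthening'' of Lemma \ref{le:3.3}: since $-q$ is an involution on cosets, $-qT_1\cap T_1'=-qT_1\cap(-qT_1')=-q(T_1\cap T_1')=\emptyset$, which together with $-qT_1\cap T_1=\emptyset$ gives $-qT_1\cap Z=\emptyset$. So the one missing piece---an explicit description of the complementary cosets and a verification that $-q$ permutes them---is precisely the part your proposal defers, and without it the identity $|Z_1|=|Z|-|T_1|$ on which your count rests is unproven.
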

\begin{proof}
	Let $$T_{1}=
	\bigcup_{\substack{s+(m+t)k+h+\alpha \leq v \leq s+ (m+t+2)k+(h-1)-\alpha,\\if~ v\leq s+mk,~0 \leq u\leq\alpha,~else~ 0 \leq u\leq\alpha-1\\-m\leq t\leq (2m-1)m~and~t~is~odd}}C_{uq+v}$$
	$$when~ -m\leq t \leq -1,~h=1.$$
	$$when~ 1\leq t\leq 2m-1,~h=2.$$
	$$when~ 2m+1 \leq t \leq 4m-1,~h=3.$$
	$$\dots\dots$$
	$$when~ (2m-2)m+1 \leq t \leq (2m-1)m,~h=m+1.$$
	and
	
	\begin{equation*}
    \begin{split}
	T_1'~~~~=&~~~~~~~\bigcup_{\substack{s+1\leq v \leq s+\alpha,~0\leq u\leq\alpha}}C_{uq+v}
	~~~~~~~~~~\bigcup_{\substack{s+2tk+1-\alpha\leq v \leq s+2tk+\alpha,\\1\leq t\leq \frac{m-1}{2},~0\leq u\leq \alpha}}C_{uq+v}\\
	&\bigcup_{\substack{s+2tk+\frac{f+3}{2}-\alpha\leq v \leq s+2tk+\frac{f+1}{2}+\alpha,\\\frac{fm+3}{2}\leq t \leq \frac{(f+2)m-1}{2},\\1\leq f\leq 2m-1~is~odd,~0\leq u\leq \alpha-1}}C_{uq+v}
    ~\bigcup_{\substack{s+2tk+m+1-\alpha\leq v \leq s+2tk+m+\alpha,\\\frac{(2m-1)m+3}{2}\leq t \leq m^2,~0\leq u\leq \alpha-1}}C_{uq+v}\\
	&~~~~~\bigcup_{\substack{s+2ak+m+1-\alpha\leq v \leq s+q,\\0\leq u\leq \alpha-1}}C_{uq+v}
	\bigcup_{\substack{s+(gm+1)k+\frac{g+1}{2}-\alpha\leq v \leq s+(gm+1)k+\frac{g+1}{2}+\alpha,\\1\leq g \leq 2m-1~is~odd,~0\leq u\leq \alpha-1}}C_{uq+v}.
    \end{split}
    \end{equation*}

	From Lemma \ref{le:3.2}, we have 
	
    \begin{equation*}
    \begin{split}
    -qT_1'~~~=&~~~~~~~\bigcup_{\substack{s+1\leq v \leq s+\alpha,~0\leq u\leq\alpha}}C_{vq-u}
    ~~~~~~~~~~~\bigcup_{\substack{s+2tk+1-\alpha\leq v \leq s+2tk+\alpha,\\1\leq t\leq \frac{m-1}{2},~0\leq u\leq \alpha}}C_{vq-u}\\
    &\bigcup_{\substack{s+2tk+\frac{f+3}{2}-\alpha\leq v \leq s+2tk+\frac{f+1}{2}+\alpha,\\\frac{fm+3}{2}\leq t \leq \frac{(f+2)m-1}{2},\\1\leq f\leq 2m-1~is~odd,~0\leq u\leq \alpha-1}}C_{vq-u}
    ~\bigcup_{\substack{s+2tk+m+1-\alpha\leq v \leq s+2tk+m+\alpha,\\ \frac{(2m-1)m+3}{2}\leq t \leq m^2,~0\leq u\leq \alpha-1}}C_{vq-u}\\
    &~~~~~\bigcup_{\substack{s+2ak+m+1-\alpha\leq v \leq s+q,\\0\leq u\leq \alpha-1}}C_{vq-u}
    \bigcup_{\substack{s+(gm+1)k+\frac{g+1}{2}-\alpha\leq v \leq s+(gm+1)k+\frac{g+1}{2}+\alpha,\\1\leq g \leq 2m-1~is~odd,~0\leq u\leq \alpha-1}}C_{vq-u}.
    \end{split}
    \end{equation*}
	
	It is easy to check that $-qT_1'=T_1'$. From the definitions of $Z$, $T_1$ and $T_1'$, we have $Z=T_1\bigcup T_1'$. 
	Then from the definition of $Z_{1}$,
\begin{equation*}
\begin{split}
	Z_{1}=Z\bigcap (-qZ)&=(T_1\bigcup T_1')\bigcap(-qT_1\bigcup -qT_1')\\
	&=(T_1\bigcap-qT_1)\bigcup(T_1\bigcap-qT_1')\bigcup(T_1'\bigcap-qT_1)\bigcup(T_1'\bigcap-qT_1')\\
	&=T_1'.
\end{split}
\end{equation*}
Therefore, $|Z_{1}|=|T_1'|=4\alpha(a\alpha +m)$.
	
\end{proof}
    \ 
    
    From Lemmas \ref{le:3.2}, \ref{le:3.3} and Theorem \ref{th:3.4} above, we can obtain the first construction of EAQMDS codes in the following theorem.
    
\begin{theorem}\label{th:3.5}
	Let $n=(q^2+1)/a$, $a=m^2+1$ $(m\geq1~is~odd)$ and $q=2ak+m$ $(k\geq 1)$ be an odd prime power. There are EAQMDS codes with parameters 
    $$[[n, n-4\alpha(q-m-a\alpha)-4mk, 2(\alpha q+mk)+1; 4\alpha(a\alpha+m)]]_q,$$ where $1\leq \alpha\leq k$.
\end{theorem}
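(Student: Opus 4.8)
The plan is to realize the stated EAQECC as the image, under Proposition \ref{pro:2.2}, of the classical cyclic code $\mathcal{C}$ whose defining set is $Z = C_{s+1}\cup C_{s+2}\cup\cdots\cup C_{s+(\alpha q+mk)}$, and to show that this classical code is MDS so that the quantum minimum distance is exactly $2(\alpha q+mk)+1$. The genuinely hard computation, namely the value of $|Z_1|$, has already been settled in Theorem \ref{th:3.4}, so most of the remaining work is assembling and verifying parameters.

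First I would pin down the classical parameters of $\mathcal{C}$. Since $q=2ak+m$ we have $n=4ak^2+4mk+1$ and $s=2ak^2+2mk$, and by Lemma \ref{le:3.1} each coset in range is $C_{s+j}=\{s+j,\,s+1-j\}$, of size $2$ because $n$ is odd. The hypothesis $1\le\alpha\le k$ forces $\alpha q+mk\le s$, so the upper indices $s+1,\dots,s+(\alpha q+mk)$ and the lower indices $s,\dots,s+1-(\alpha q+mk)$ all lie in $[1,n-1]$ and the $\alpha q+mk$ cosets are pairwise distinct; hence $|Z|=2(\alpha q+mk)$ and $\dim\mathcal{C}=n-2(\alpha q+mk)$. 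Viewed as residues, $Z$ is precisely the unbroken block $\{s+1-(\alpha q+mk),\dots,s+(\alpha q+mk)\}$ of $2(\alpha q+mk)$ consecutive integers, so the BCH bound (Proposition \ref{pro:2.1}) gives $d\ge 2(\alpha q+mk)+1$, while the classical Singleton bound forces $d\le n-\dim\mathcal{C}+1=2(\alpha q+mk)+1$. Therefore $\mathcal{C}$ is MDS with $d=2(\alpha q+mk)+1$.

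Next I would read off the quantum parameters. Let $H$ be a parity-check matrix of $\mathcal{C}$. Theorem \ref{th:3.4} gives $|Z_1|=4\alpha(a\alpha+m)$, so by Lemma \ref{le:2.4} the number of required entangled pairs is $c=|Z_1|=4\alpha(a\alpha+m)$. Feeding $\dim\mathcal{C}$, $d$ and $c$ into Proposition \ref{pro:2.2} yields an EAQECC of length $n$ and minimum distance $2(\alpha q+mk)+1$ whose dimension is $2\dim\mathcal{C}-n+c=n-4(\alpha q+mk)+4\alpha(a\alpha+m)$; a short simplification rewrites this as $n-4\alpha(q-m-a\alpha)-4mk$, matching the claim.

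Finally I would certify optimality through Proposition \ref{pro:1.1}: a direct computation gives $n+c-k=4\alpha q+4mk=4(\alpha q+mk)=2(d-1)$, so the entanglement-assisted Singleton bound holds with equality and the code is EAQMDS. I expect no serious obstacle beyond bookkeeping: the one delicate point is confirming that the upper block and its negation fuse into a single run of $2(\alpha q+mk)$ consecutive residues inside $[1,n-1]$ — this is exactly what the constraint $\alpha\le k$ secures, and it is what simultaneously fixes $\dim\mathcal{C}$ and the minimum distance — after which the remaining steps are routine algebra and appeals to the already-established Theorem \ref{th:3.4}.
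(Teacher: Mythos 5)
Your proposal is correct and follows essentially the same route as the paper's own proof: the same defining set $Z=C_{s+1}\cup\cdots\cup C_{s+(\alpha q+mk)}$, the BCH bound plus classical Singleton bound to show $\mathcal{C}$ is MDS, Theorem \ref{th:3.4} with Lemma \ref{le:2.4} to get $c=4\alpha(a\alpha+m)$, Proposition \ref{pro:2.2} to assemble the EAQECC, and the equality $n+c-k=2(d-1)$ to conclude it is EAQMDS. In fact you are somewhat more careful than the paper: you explicitly verify that the cosets fuse into a single run of $2(\alpha q+mk)$ consecutive residues (which the paper merely asserts), and you use the correct dimension formula $2\dim\mathcal{C}-n+c$, which is what the paper's parameter count implicitly relies on even though its statement of Proposition \ref{pro:2.2} omits the $+c$.
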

\begin{proof}
	For a positive integer $1\leq \alpha\leq k$.
	Suppose that $\mathcal{C}$ is a cyclic code of length $n=(q^2+1)/a,~a=m^2+1$ ($m\geq1$~is~odd) with defining set 
	$$Z=C_{s+1}\bigcup C_{s+2}\bigcup\dots\bigcup C_{s+(\alpha q+mk)}.$$
	Then the dimension of $\mathcal{C}$ is $n-2(\alpha q+mk)$. Note that cyclic code $\mathcal{C}$ have $2(\alpha q+mk)$ consecutive roots.
	By Proposition \ref{pro:2.1}, the minimum distance of $\mathcal{C}$ is at least $2(\alpha q+mk)+1$. From the Singleton bound, $\mathcal{C}$ is an MDS code with parameters $[[n,n-2(\alpha q+mk), 2(\alpha q+mk)+1]]_{q^2}$.
	Then by Theorem \ref{th:3.4}, we have $|Z_{1}|=|T_1'|=4\alpha(a\alpha +m)$.
   From Proposition \ref{pro:2.2} and Lemma \ref{le:2.4}, there are EAQECCs with parameters
    $$[[n, n-4\alpha(q-m-a\alpha)-4mk, 2(\alpha q+mk)+1; 4\alpha(a\alpha+m)]]_q.$$
    It is easy to check that $$n-k+c+2=4(\alpha q+mk)+2=2d.$$
    By Proposition \ref{pro:1.1}, it implies that the EAQECCs we constructed are EAQMDS codes.
    
\end{proof}
    \
    
    \begin{table}[H]
	\centering
	\renewcommand{\tablename}
	\caption{ {TABLE $\rm\uppercase\expandafter{\romannumeral 1}$:} SAMPLE PARAMETERS OF MDS \protect\\ EAQECCs OF THEOREM 3.5 }\\
	\begin{tabular}{c<{\centering}c<{\centering}c<{\centering}c<{\centering}c<{\centering}}\\
		\hline
		$m$ & $q$  & $n$  &$\alpha$  &   Parameters
		\\
		\Xhline{1.0pt}
		
		$1$ & $13$  & $85$ &  $1$  & $[[85,33,33;12]]_{13}$ \\
		$1$ & $13$  & $85$ &  $2$  & $[[85,9,59;40]]_{13}$ \\
		$1$ & $13$  & $85$ &  $3$  & $[[85,1,85;84]]_{13}$ \\
		$1$ & $17$  & $145$ &  $1$  & $[[145,73,43;12]]_{17}$ \\
		$1$ & $17$  & $145$ &  $2$  & $[[145,33,77;40]]_{17}$ \\
		$1$ & $17$  & $145$ &  $3$  & $[[145,9,111;84]]_{17}$ \\
		$1$ & $17$  & $145$ &  $4$  & $[[145,1,145;144]]_{17}$ \\
		$3$ & $43$  & $185$ &  $1$  & $[[185,41,99;52]]_{43}$ \\
		$3$ & $43$  & $185$ &  $2$  & $[[185,1,185;184]]_{43}$ \\
		$3$ & $83$  & $689$ &  $1$  & $[[689,361,191;52]]_{83}$ \\
		$3$ & $83$  & $689$ &  $2$  & $[[689,161,357;184]]_{83}$ \\
		$3$ & $83$  & $689$ &  $3$  & $[[689,41,523;396]]_{83}$ \\
		$3$ & $83$  & $689$ &  $4$  & $[[689,1,689;688]]_{83}$ \\
	
		$5$ & $109$  & $457$ &  $1$  & $[[457,105,239;124]]_{109}$ \\
		$5$ & $109$  & $457$ &  $2$  & $[[457,1,457;456]]_{109}$ \\
		\hline		
	\end{tabular}
\end{table} 
\

\noindent\textbf{Case \uppercase\expandafter{\romannumeral 2} $~~~~\bm {q=2ak+a+m} $}\\

    As for the case that $n=(q^2+1)/a,~a=m^2+1$ ($m\geq1$~is~odd) and $q=2ak+a+m$ $(k\geq 1)$ is an odd prime power, we can produce the following EAQMDS codes. The proofs are similar to that in the Case \uppercase\expandafter{\romannumeral 1}, so we omit it here.
    
\begin{lemma}\label{le:7}

    Let $n=(q^2+1)/a$,~$a=m^2+1$ $(m\geq1~is~odd)$, $s=(n-1)/2$ and $q=2ak+a+m~(k\geq 1)$ be an odd prime power. For a positive integer $1\leq \alpha\leq k$, let 
   
    \begin{equation*}
    \begin{split}
    T_{1}=
    \bigcup_{\substack{s+t(2k+1)+(h+1)+\alpha \leq v \leq s+ (t+1)(2k+1)+(h-1)-\alpha,\\if~ v\leq s+(a+m)k+\frac{a+2m}{2},~0 \leq u\leq\alpha,~else~ 0 \leq u\leq\alpha-1\\(h-1)m+\gamma_1\leq t\leq  hm-\gamma_2,~1\leq h\leq m}}C_{uq+v}\\
    when~ 1\leq h \leq (m-1)/2, \gamma_1=0, \gamma_2=1;\\
    when~h = (m+1)/2, \gamma_1=0, \gamma_2=0;\\
    when~ (m+1)/2+1 \leq h \leq m, \gamma_1=1, \gamma_2=0.\\
    \end{split}
    \end{equation*}

    Then $T_1\bigcap-qT_1=\emptyset$.
	
\end{lemma}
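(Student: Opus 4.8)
The plan is to reproduce, mutatis mutandis, the argument of Lemma~\ref{le:3.3}, replacing the Case~I index pattern by the one recorded above. First I would invoke Lemma~\ref{le:3.2}(2): every coset constituting $T_1$ has the shape $C_{uq+v}$, and the identity $-qC_{uq+v}=C_{vq-u}$ lets me describe $-qT_1$ as the union of the cosets $C_{vq-u}$ ranging over exactly the same index set $(u,v,t,h)$ that defines $T_1$. Because here each cyclotomic coset is the two-element set $C_i=\{i,n-i\}$ by Lemma~\ref{le:3.1}, proving $T_1\cap(-qT_1)=\emptyset$ is equivalent to showing that no representative $uq+v$ occurring in $T_1$ is congruent modulo $n$ to $\pm(v'q-u')$ for any representative $v'q-u'$ occurring in $-qT_1$. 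I would phrase the whole verification as a comparison of these integer representatives.

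Next I would break the index set into the three regimes built into the statement, namely $1\le h\le (m-1)/2$ with $(\gamma_1,\gamma_2)=(0,1)$, the central value $h=(m+1)/2$ with $(\gamma_1,\gamma_2)=(0,0)$, and $(m+1)/2+1\le h\le m$ with $(\gamma_1,\gamma_2)=(1,0)$; within each regime the dichotomy at $v\le s+(a+m)k+\frac{a+2m}{2}$ fixes whether $u$ runs up to $\alpha$ or only to $\alpha-1$, exactly paralleling the split at $v\le s+mk$ in Case~I. For a representative $u_iq+v_i$ in a given regime I would record the sharp upper bound obtained by inserting the maximal admissible $u$ and $v$, and for the matching $v_iq-u_i$ the sharp lower bound obtained by inserting the minimal $v$ and the maximal $u$, just as in the displayed chains of inequalities in the proof of Lemma~\ref{le:3.3}. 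The heart of the argument is then the array of strict inequalities $u_iq+v_i<v_jq-u_j$ taken over all ordered pairs of regimes $i,j$: since every $v$ sits near $s\sim n/2$ while every $u$ is at most $\alpha\le k$, the representatives $uq+v$ of $T_1$ remain below $n$, and the separation witnessed by these inequalities forces the cosets $C_{vq-u}$, once reduced modulo $n$ together with their mirror images $n-(\cdot)$, to avoid $T_1$ entirely.

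The step I expect to be the main obstacle is not any single estimate but the bookkeeping at the seams of the case split. One must check that the $\pm\alpha$ offsets trimming the $v$-intervals ($+\alpha$ at the bottom, $-\alpha$ at the top) leave no overlap between consecutive values of $h$, and that the boundary values controlled by $\gamma_1,\gamma_2$ at $h=(m-1)/2$, $h=(m+1)/2$ and $h=(m+1)/2+1$ knit together into a single consistent pattern with no gap or double counting. A further care point is to confirm that the largest representative $u_iq+v_i$, attained at $u=\alpha$ with $v$ equal to the cutoff $s+(a+m)k+\frac{a+2m}{2}$, is genuinely at most $n-1$, so that no wraparound modulo $n$ corrupts the comparison, and dually that all reduced values of $v_jq-u_j$ lie strictly above it. Once these seam, boundary, and range checks are in place, the pairwise inequalities close the argument and yield $T_1\cap(-qT_1)=\emptyset$, precisely as in Case~I.
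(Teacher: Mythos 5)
Your proposal takes exactly the route the paper intends: the paper omits the proof of this lemma, stating only that it is similar to Case~\uppercase\expandafter{\romannumeral 1}, and your plan --- invoking Lemma~\ref{le:3.2}(2) to write $-qT_1$ as the union of the cosets $C_{vq-u}$ over the same index set, then ruling out any coincidence $uq+v\equiv\pm(v'q-u')\pmod n$ by regime-by-regime bounds on representatives and pairwise strict inequalities --- is precisely the adaptation of the proof of Lemma~\ref{le:3.3} that the paper has in mind. Your explicit attention to the $\pm$ congruence, the seams between the three $h$-regimes, and the mod-$n$ wraparound is, if anything, more careful than the Case~\uppercase\expandafter{\romannumeral 1} argument being transplanted.
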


\begin{theorem}\label{th:8}
	Let $n=(q^2+1)/a$,~$a=m^2+1$ $(m\geq1~is~odd)$, $s=(n-1)/2$ and $q=2ak+a+m~(k\geq 1)$ be an odd prime power. For a positive integer m with $1\leq \alpha\leq k$, 
	let $\mathcal{C}$ be a cyclic code with defining set $Z$ given as follows 
	$$Z=C_{s+1}\bigcup C_{s+2}\bigcup\dots\bigcup C_{s+[\alpha q +(a+m)k+\frac{a+2m}{2}]}.$$
	Then $|Z_{1}|=4\alpha(a\alpha+a+m)+a+2m$.
\end{theorem}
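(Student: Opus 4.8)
The plan is to follow verbatim the strategy used to prove Theorem \ref{th:3.4} in Case \uppercase\expandafter{\romannumeral 1}, now adapted to the arithmetic of Case \uppercase\expandafter{\romannumeral 2} where $q=2ak+a+m$. The defining set $Z=\bigcup_{i=1}^{N}C_{s+i}$ with $N=\alpha q+(a+m)k+\frac{a+2m}{2}$ is a block of consecutive cyclotomic cosets, and by Lemma \ref{le:3.1} each $C_{s+i}=\{s+i,\,n-(s+i)\}$ has size two. I would split $Z$ as a union $Z=T_1\cup T_1'$, where $T_1$ is the set furnished by Lemma \ref{le:7} and $T_1'$ is a complementary \emph{symmetric} part that I must write down explicitly as a union of thin bands $C_{uq+v}$ (each of $v$-width roughly $2\alpha$, with $u$ ranging in $\{0,\dots,\alpha\}$ or $\{0,\dots,\alpha-1\}$), mirroring the display for $T_1'$ in the proof of Theorem \ref{th:3.4} but with the band centres shifted to multiples of $2k+1$ to account for the $q=2ak+a+m$ normalisation and the extra half-block $\frac{a+2m}{2}$.

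First I would confirm that the bands defining $T_1'$ together with $T_1$ partition the index interval $\{s+1,\dots,s+N\}$, a bookkeeping check that every residue $uq+v$ falling in the range is counted exactly once. Next, using $-qC_{uq+v}=C_{vq-u}$ from Lemma \ref{le:3.2}, I would verify the crucial invariance $-qT_1'=T_1'$: applying $-q$ to a band of $T_1'$ interchanges the roles of $u$ and $v$, and the centres are chosen so that the image band again lies in $T_1'$. Combined with the conclusion $T_1\cap-qT_1=\emptyset$ of Lemma \ref{le:7}, I would then expand, by the same set-algebra as in Theorem \ref{th:3.4},
$$Z_1=Z\cap(-qZ)=(T_1\cap-qT_1)\cup(T_1\cap-qT_1')\cup(T_1'\cap-qT_1)\cup(T_1'\cap-qT_1'),$$
where the first term is empty by Lemma \ref{le:7}, the last equals $T_1'$ by invariance, and the two cross terms vanish because the $-q$-image of $T_1$ avoids the bands of $T_1'$, and conversely. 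Hence $Z_1=T_1'$.

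Finally I would count $|T_1'|$. Since the cosets are distinct and of size two, $|T_1'|=2\cdot\#\{uq+v\in T_1'\}$, so the cardinality reduces to summing the sizes of the individual bands over the index parameters $h$ and $t$. Carrying out this sum should yield $|Z_1|=|T_1'|=4\alpha(a\alpha+a+m)+a+2m$.

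I expect the main obstacle to be writing down the correct $T_1'$ and verifying $-qT_1'=T_1'$ together with the vanishing of the cross terms $T_1\cap-qT_1'$ and $T_1'\cap-qT_1$. In Case \uppercase\expandafter{\romannumeral 2} the shift $a+m$ and the half-block $\frac{a+2m}{2}$ make the band centres (multiples of $2k+1$ rather than $2k$) and the boundary strips more delicate than in Case \uppercase\expandafter{\romannumeral 1}; in particular the extra summand $a+2m$ in the final count comes from a boundary band that is self-paired under $-q$ in a slightly different way, so the arithmetic at the edges of the index range must be handled with care.
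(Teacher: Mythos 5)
Your proposal is correct and takes essentially the same approach as the paper: the paper omits the proof of this theorem entirely, stating only that it is similar to Case \uppercase\expandafter{\romannumeral 1}, and your plan is precisely the proof of Theorem \ref{th:3.4} (decomposition $Z=T_1\bigcup T_1'$ with $T_1$ furnished by Lemma \ref{le:7}, the invariance $-qT_1'=T_1'$ via Lemma \ref{le:3.2}, the four-term set-algebra expansion of $Z\bigcap(-qZ)$, then counting the bands) transplanted to $q=2ak+a+m$. The work you flag as remaining---writing down the explicit bands of $T_1'$ and summing their sizes to obtain $4\alpha(a\alpha+a+m)+a+2m$---is exactly the bookkeeping the paper also leaves to the reader.
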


\begin{theorem}\label{th:9}
    Let $n=(q^2+1)/a1$,~$a=m^2+1$ $(m\geq1~is~odd)$, $s=(n-1)/2$ and $q=2ak+a+m~(k\geq 1)$ be an odd prime power. There are EAQMDS codes with parameters 
    $$[[n, n-4\alpha(q-a-m-a\alpha)-4(a+m)k-(a+2m),$$
    $$ 2[\alpha q +(a+m)k+\frac{a+2m}{2}]+1; 4\alpha(a\alpha+a+m)+a+2m]]_q.$$ where $1\leq \alpha\leq k$.
	
\end{theorem}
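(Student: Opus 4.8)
The plan is to follow exactly the same three-step template that was used for Case~I in Theorems~\ref{th:3.4} and \ref{th:3.5}, since Theorem~\ref{th:9} is the Case~II analogue. First I would take the cyclic code $\mathcal{C}$ over $\mathbb{F}_{q^2}$ of length $n=(q^2+1)/a$ whose defining set is the block of consecutive cyclotomic cosets
$$Z=C_{s+1}\bigcup C_{s+2}\bigcup\dots\bigcup C_{s+[\alpha q +(a+m)k+\frac{a+2m}{2}]}.$$
Because each coset has the simple form $C_i=\{i,n-i\}$ by Lemma~\ref{le:3.1}, the set $Z$ contains exactly $2[\alpha q+(a+m)k+\frac{a+2m}{2}]$ distinct elements, so $\dim(\mathcal{C})=n-2[\alpha q+(a+m)k+\frac{a+2m}{2}]$. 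The block $\{s+1,s+2,\dots,s+[\alpha q+(a+m)k+\frac{a+2m}{2}]\}$ together with its negatives gives $2[\alpha q+(a+m)k+\frac{a+2m}{2}]$ consecutive elements in $Z$ (using the symmetry $C_i\cup C_{n-i}$ around $s=(n-1)/2$), so the BCH bound in Proposition~\ref{pro:2.1} forces the minimum distance to be at least $d=2[\alpha q+(a+m)k+\frac{a+2m}{2}]+1$. The Singleton bound then pins it down, so $\mathcal{C}$ is MDS with parameters $[n,n-2[\alpha q+(a+m)k+\frac{a+2m}{2}],\,2[\alpha q+(a+m)k+\frac{a+2m}{2}]+1]_{q^2}$.

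Next I would compute the number of entangled states. By Proposition~\ref{pro:2.2} this is $c=\mathrm{rank}(HH^{\dag})$, and by Lemma~\ref{le:2.4} together with Definition~\ref{def:2.3} it equals $|Z_1|$ where $Z_1=Z\cap(-qZ)$. This is precisely the content of Theorem~\ref{th:8}, which asserts $|Z_1|=4\alpha(a\alpha+a+m)+a+2m$; I would simply invoke it. The justification behind Theorem~\ref{th:8} is the decomposition $Z=T_1\cup T_1'$ analogous to the Case~I proof of Theorem~\ref{th:3.4}: Lemma~\ref{le:7} guarantees $T_1\cap(-qT_1)=\emptyset$, and the remaining cross terms $T_1\cap(-qT_1')$ and $T_1'\cap(-qT_1)$ vanish by the same size/ordering comparisons of the subscripts $uq+v$ versus $vq-u$, while $T_1'$ is self-paired, i.e. $-qT_1'=T_1'$. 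Consequently $Z_1=T_1'$ and $|Z_1|=|T_1'|=4\alpha(a\alpha+a+m)+a+2m$.

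Finally I would assemble the parameters. Combining Proposition~\ref{pro:2.2} and Lemma~\ref{le:2.4}, the MDS code $\mathcal{C}$ yields an EAQECC with
$$k'=2\bigl(n-2[\alpha q+(a+m)k+\tfrac{a+2m}{2}]\bigr)-n=n-4[\alpha q+(a+m)k+\tfrac{a+2m}{2}],$$
which after substituting $q=2ak+a+m$ and simplifying becomes $n-4\alpha(q-a-m-a\alpha)-4(a+m)k-(a+2m)$, together with $c=4\alpha(a\alpha+a+m)+a+2m$ and $d=2[\alpha q+(a+m)k+\frac{a+2m}{2}]+1$. To confirm the MDS property I would verify the entanglement-assisted Singleton equality of Proposition~\ref{pro:1.1}, namely $n-k'+c+2=2d$: both sides equal $4[\alpha q+(a+m)k+\frac{a+2m}{2}]+2$, so equality holds and the code is EAQMDS.

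The main obstacle is the verification hidden inside Theorem~\ref{th:8} that $Z_1$ reduces exactly to $T_1'$, i.e. that the empty-intersection claim of Lemma~\ref{le:7} is correct and that the cross terms genuinely vanish. Since $q=2ak+a+m$ shifts every range relative to Case~I, the bookkeeping of the interval endpoints for $v$ (the half-integer offset $\frac{a+2m}{2}$ and the parameter $\gamma_1,\gamma_2$ splitting by the parity of $h$ around $(m\pm1)/2$) must be tracked carefully so that the counting yields precisely $4\alpha(a\alpha+a+m)+a+2m$; the inequalities $u_iq+v_i<v_jq-u_j$ that separate $T_1$ from $-qT_1$ rely on the bound $u_iq+v_i\le (\text{something})k$ staying strictly below the smallest element of $-qT_1$, and getting that constant right for this residue of $q$ is the delicate point.
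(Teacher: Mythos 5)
Your overall strategy is the right one, and it is the same as the paper's: the paper gives no proof of Theorem~\ref{th:9} at all (it only says the argument is "similar to Case I"), and your three-step template --- classical MDS code via Lemma~\ref{le:3.1}, Proposition~\ref{pro:2.1} and the Singleton bound, then Theorem~\ref{th:8} for $c=|Z_1|$, then assembly of the EAQECC parameters --- is exactly the Case~I pattern of Theorems~\ref{th:3.4} and~\ref{th:3.5}. However, your final assembly contains a genuine error. You set $k'=2\bigl(n-2\bigl[\alpha q+(a+m)k+\tfrac{a+2m}{2}\bigr]\bigr)-n=n-4\bigl[\alpha q+(a+m)k+\tfrac{a+2m}{2}\bigr]$ and claim that "after substituting $q=2ak+a+m$" this equals the theorem's dimension $n-4\alpha(q-a-m-a\alpha)-4(a+m)k-(a+2m)$. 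It does not. Your expression expands to $n-4\alpha q-4(a+m)k-2(a+2m)$, while the theorem's expands to $n-4\alpha q+4a\alpha^{2}+4\alpha(a+m)-4(a+m)k-(a+2m)$; their difference is exactly $4a\alpha^{2}+4\alpha(a+m)+(a+2m)=4\alpha(a\alpha+a+m)+a+2m=c$, which is independent of $q$, so no substitution can reconcile them. Your Singleton verification inherits the same inconsistency: with your $k'$ one gets $n-k'+c+2=4\bigl[\alpha q+(a+m)k+\tfrac{a+2m}{2}\bigr]+c+2\neq 2d$, not the equality you assert.

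The root cause is that Proposition~\ref{pro:2.2} as printed in the paper omits a term: the correct Wilde--Brun formula yields an EAQECC of dimension $2k-n+c$, not $2k-n$, and that $+c$ is precisely what the theorem's stated dimension (and the paper's own parameters in Theorem~\ref{th:3.5}, which you can check the same way) build in. The repair is one line: take $k'=2\bigl(n-2\bigl[\alpha q+(a+m)k+\tfrac{a+2m}{2}\bigr]\bigr)-n+c$ with $c=4\alpha(a\alpha+a+m)+a+2m$ from Theorem~\ref{th:8}; this equals the theorem's dimension identically, and then $n-k'+c=2\bigl(n-k_{\mathrm{classical}}\bigr)=4\bigl[\alpha q+(a+m)k+\tfrac{a+2m}{2}\bigr]=2(d-1)$, so Proposition~\ref{pro:1.1} is met with equality and the code is EAQMDS. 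Note that without this correction the error is not cosmetic: a code with dimension $2k-n$ and the stated $d$ and $c$ would violate nothing but would also fail the Singleton equality, so the EAQMDS conclusion would not follow. The rest of your write-up (the consecutive-roots/BCH argument, the count $|Z|=2\bigl[\alpha q+(a+m)k+\tfrac{a+2m}{2}\bigr]$, and the appeal to Theorem~\ref{th:8} via the decomposition $Z=T_1\cup T_1'$) is correct as it stands.
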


    \begin{table}[H]
    	\centering
    	\renewcommand{\tablename}
    	\caption{ {TABLE $\rm\uppercase\expandafter{\romannumeral 2}$:} SAMPLE PARAMETERS OF MDS \protect\\ EAQECCs OF THEOREM 3.8 }\\
    	\begin{tabular}{c<{\centering}c<{\centering}c<{\centering}c<{\centering}c<{\centering}}\\
    		\hline
    		$m$ & $q$  & $n$  &$\alpha$  &   Parameters
    		\\
    		\Xhline{1.0pt}
    		
    		$1$ & $11$  & $61$ &  $1$  & $[[61,9,39;24]]_{11}$ \\
    		$1$ & $11$  & $61$ &  $2$  & $[[61,1,61;60]]_{11}$ \\
    	    $1$ & $19$  & $181$ &  $1$  & $[[181,73,67;24]]_{19}$ \\
    	    $1$ & $19$  & $181$ &  $2$  & $[[181,33,105;60]]_{19}$ \\
    	    $1$ & $19$  & $181$ &  $3$  & $[[181,9,143;112]]_{19}$ \\
    	    $1$ & $19$  & $181$ &  $4$  & $[[181,1,181;180]]_{19}$ \\
    	    $3$ & $53$  & $281$ &  $1$  & $[[281,41,175;108]]_{53}$ \\
    	    $3$ & $53$  & $281$ &  $2$  & $[[281,1,281;280]]_{53}$ \\
    	    $3$ & $73$  & $533$ &  $1$  & $[[533,161,241;108]]_{73}$ \\
    	    $3$ & $73$  & $533$ &  $2$  & $[[533,41,387;280]]_{73}$ \\
    	    $3$ & $73$  & $533$ &  $3$  & $[[533,1,533;532]]_{73}$ \\
    	    $5$ & $239$  & $2197$ &  $1$  & $[[2197,937,763;264]]_{239}$ \\
    	    $5$ & $239$  & $2197$ &  $2$  & $[[2197,417,1241;700]]_{239}$ \\
    	    $5$ & $239$  & $2197$ &  $3$  & $[[2197,105,1719;1344]]_{239}$ \\
    	    $5$ & $239$  & $2197$ &  $4$  & $[[2197,1,2197;2196]]_{239}$ \\
    		\hline		
    	\end{tabular}
    \end{table}

\noindent\textbf{Case \uppercase\expandafter{\romannumeral 3} $~~~~\bm{q=2ak+a-m} $}\\

    we also have similar results for $n=(q^2+1)/a,~a=m^2+1$ ($m\geq1$~is~odd) and $q=2ak+a-m$ $(k\geq 1)$ is an odd prime power, we can produce the following EAQMDS codes. These results are given in the following lemma and theorems. Because the proofs of them are similar to that in Lemma \ref{le:3.3} and Theorems \ref{th:3.4}, \ref{th:3.5}, we omit it here.

\begin{lemma}\label{le:10}

	Let $n=(q^2+1)/a$,~$a=m^2+1$ $(m\geq1~is~odd)$, $s=(n-1)/2$ and $q=2ak+a-m~(k\geq 1)$ be an odd prime power. For a positive integer $1\leq \alpha\leq k$, let 
	\\$$T_{1}=
	\bigcup_{\substack{s+t(2k+1)+(2-h)+\alpha \leq v \leq s+t(2k+1)+2k-(1-h)-\alpha,\\if~ v\leq s+(a-m)k+\frac{a-2m}{2},~0 \leq u\leq\alpha,~else~ 0 \leq u\leq\alpha-1\\(h-1)m+\gamma_1\leq t\leq  hm-\gamma_2,~1\leq h\leq m}}C_{uq+v}
	$$
	$$when~ 1\leq h \leq (m-1)/2,~\gamma_1=0,~\gamma_2=1;$$
	$$when~ h = (m+1)/2,~\gamma_1=0,~\gamma_2=0;$$
	$$when~ (m+1)/2+1 \leq h \leq m,~\gamma_1=1,~\gamma_2=0.$$
	Then $T_1\bigcap-qT_1=\emptyset$.
		
\end{lemma}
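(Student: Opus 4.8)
The plan is to follow the template of Lemma~\ref{le:3.3} almost verbatim, since the only structural change in Case~\uppercase\expandafter{\romannumeral 3} is that $a\mid(q+m)$ with $q=2ak+a-m$, so the transformation rule now comes from part~3) of Lemma~\ref{le:3.2}, namely $-qC_{uq+v}=C_{vq-u}$. First I would write out $-qT_1$ by applying this rule term by term: each coset $C_{uq+v}$ in the union defining $T_1$ is replaced by $C_{vq-u}$, while the index ranges for $t$, $h$, $u$, $v$ and the piecewise prescription for $\gamma_1,\gamma_2$ are carried over unchanged. This reduces the claim $T_1\cap(-qT_1)=\emptyset$ to showing that no representative $u_iq+v_i$ arising in $T_1$ coincides modulo $n$ with any representative $v_jq-u_j$ arising in $-qT_1$.

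Next I would organize the verification by the $m$ blocks indexed by $h$, splitting into the three regimes $1\le h\le (m-1)/2$ (where $\gamma_1=0,\gamma_2=1$), $h=(m+1)/2$ (where $\gamma_1=\gamma_2=0$), and $(m+1)/2+1\le h\le m$ (where $\gamma_1=1,\gamma_2=0$). For each block I would record the explicit interval $s+t(2k+1)+(2-h)+\alpha\le v\le s+t(2k+1)+2k-(1-h)-\alpha$ with $(h-1)m+\gamma_1\le t\le hm-\gamma_2$, together with the range of $u$, which is $0\le u\le\alpha$ precisely when $v\le s+(a-m)k+\frac{a-2m}{2}$ and $0\le u\le\alpha-1$ otherwise. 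From these intervals I would extract, exactly as in the displayed chain of inequalities in Lemma~\ref{le:3.3}, the largest representative $u_iq+v_i$ contributed to $T_1$ and the smallest representative $v_jq-u_j$ contributed to $-qT_1$ in each block.

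The decisive step, parallel to the estimate $u_iq+v_i\le(m+q)k$ in Lemma~\ref{le:3.3}, is a single separation inequality: I would show that every representative $u_iq+v_i$ coming from $T_1$ is strictly smaller, as a residue in $\{0,1,\dots,n-1\}$, than every representative $v_jq-u_j$ coming from $-qT_1$. This succeeds because each $v$ exceeds $s=(n-1)/2$ while each $u$ is bounded by $\alpha\le k$, so multiplying the large quantity $v$ by $q$ pushes $v_jq-u_j$ into a residue class well above the modest ceiling reached by $u_iq+v_i$; since the two ranges cannot meet, the cosets are distinct and the intersection is empty. I expect the main obstacle to be the bookkeeping at the block boundaries: verifying that the switch of the $u$-range at the threshold $v=s+(a-m)k+\frac{a-2m}{2}$ does not let the extreme value $\alpha q$ drift into a position that collides with the bottom of some $-qT_1$ block, and checking that consecutive $h$-blocks tile without overlap after the $-q$ map. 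Once these boundary inequalities are confirmed by the same endpoint-by-endpoint comparison used in Lemma~\ref{le:3.3}, the conclusion $T_1\cap(-qT_1)=\emptyset$ follows at once.
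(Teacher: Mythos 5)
Your overall strategy---transporting the argument of Lemma~\ref{le:3.3} to Case~III via the transformation rule $-qC_{uq+v}=C_{vq-u}$ from part~3) of Lemma~\ref{le:3.2}, and organizing the check block by block in $h$ and $t$---is exactly what the paper intends, since the paper omits this proof with the remark that it is similar to that of Lemma~\ref{le:3.3}. However, your decisive step is false as stated. You claim a single separation inequality: every representative $u_iq+v_i$ of $T_1$ is strictly smaller, as a residue in $\{0,1,\dots,n-1\}$, than every representative $v_jq-u_j$ of $-qT_1$, because ``multiplying the large quantity $v$ by $q$ pushes $v_jq-u_j$ into a residue class well above'' the $T_1$ representatives. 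This misreads modular arithmetic: reduction mod $n$ does not preserve order, and since every $v>s=(n-1)/2$, the products $v_jq$ wrap around $n$ several times and can land anywhere, including far below $s$. Concretely, take $m=1$ (so $a=2$), $k=2$, $q=9$, $n=41$, $s=20$, $\alpha=1$. Then $T_1=\{C_{22},C_{31},C_{23},C_{27},C_{28}\}$ (the coset $C_{31}$ arising as $C_{q+22}$), and applying the rule gives $-qT_1=\{C_{34},C_{33},C_{2},C_{38},C_{6}\}$; for instance $-qC_{23}=C_{23\cdot 9\bmod 41}=C_{2}$. The residue $2$ lies below every representative of $T_1$, so your proposed inequality fails outright, even though the lemma itself is true here.

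What actually makes the lemma hold, and what a complete write-up must establish, is a two-sided statement rather than a linear ordering of representatives: $T_1\subseteq Z$, so all elements of all cosets of $T_1$ lie in the band $\{s-\delta+1,\dots,s+\delta\}$ with $\delta=\alpha q+(a-m)k+\frac{a-2m}{2}$ (in the example, $\{10,\dots,31\}$), whereas each coset of $-qT_1$, after explicit reduction mod $n$, lies entirely outside that band---some cosets below it and some above it (in the example, $\{2,3,6,7,8\}\cup\{33,34,35,38,39\}$). Proving this requires actually carrying out the mod-$n$ reduction of $v_jq-u_j$ (e.g., using $sq\equiv s-\frac{q-1}{2}\pmod n$) block by block, and, because $C_x=C_{n-x}$, one must exclude both coincidences $u_iq+v_i\equiv v_jq-u_j$ and $u_iq+v_i\equiv -(v_jq-u_j)\pmod n$; a one-sided comparison of single representatives, even if it were true, would not rule out the reflected one. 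So while your framework and bookkeeping plan match the paper's, the central inequality on which your proposal rests would fail, and this is a genuine gap.
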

	
\begin{theorem}\label{th:11}
	Let $n=(q^2+1)/a$, $a=m^2+1$ $(m\geq1~is~odd)$, $s=(n-1)/2$ and $q=2ak+a-m~(k\geq 1)$ be an odd prime power. For a positive integer m with $1\leq \alpha\leq k$, 
	let $\mathcal{C}$ be a cyclic code with defining set $Z$ given as follows 
	$$Z=C_{s+1}\bigcup C_{s+2}\bigcup\dots\bigcup C_{s+[\alpha q +(a-m)k+\frac{a-2m}{2}]}.$$
	Then $|Z_{1}|=4\alpha(a\alpha+a-m)+a-2m$.
\end{theorem}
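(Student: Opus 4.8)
The plan is to reproduce, for $q=2ak+a-m$, the decomposition argument used in the proof of Theorem~\ref{th:3.4}. Write the defining set as a disjoint union $Z=T_1\bigcup T_1'$, where $T_1$ is exactly the union of cosets displayed in Lemma~\ref{le:10} (for which $T_1\bigcap -qT_1=\emptyset$ is already in hand), and $T_1'$ is the complementary family of cosets $C_{uq+v}\subseteq Z$ that is invariant under the map $X\mapsto -qX$. Once $T_1'$ is pinned down and shown to satisfy $-qT_1'=T_1'$, the quantity $|Z_1|$ is forced to equal $|T_1'|$, and the whole proof reduces to a bookkeeping of ranges.

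The first step is to write $T_1'$ explicitly, following the template of the proof of Theorem~\ref{th:3.4} but with the block centres and endpoints adjusted to the present $q$ and to the right end $s+[\alpha q+(a-m)k+\frac{a-2m}{2}]$ of $Z$. Concretely, $T_1'$ should collect the ``boundary'' indices $v$ lying within $\alpha$ of the block edges $s+t(2k+1)+\cdots$ appearing in Lemma~\ref{le:10}, paired with $0\le u\le\alpha$ below the midpoint $s+(a-m)k+\frac{a-2m}{2}$ and with $0\le u\le\alpha-1$ above it. Applying case~3) of Lemma~\ref{le:3.2} in both directions, $-qC_{uq+v}=C_{vq-u}$ and $-qC_{vq-u}=C_{uq+v}$, one checks that each of these symmetric coset families is mapped onto itself, so $-qT_1'=T_1'$.

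With $-qT_1'=T_1'$ established and $Z=T_1\bigcup T_1'$ a disjoint decomposition (every coset is $\{i,-i\}$ by Lemma~\ref{le:3.1}, so it suffices to check that the index ranges of $T_1$ and $T_1'$ partition $\{s+1,\dots,s+[\alpha q+(a-m)k+\frac{a-2m}{2}]\}$), I would expand
\begin{equation*}
\begin{split}
Z_1=Z\bigcap(-qZ)&=(T_1\bigcup T_1')\bigcap(-qT_1\bigcup -qT_1')\\
&=(T_1\bigcap -qT_1)\bigcup(T_1\bigcap -qT_1')\bigcup(T_1'\bigcap -qT_1)\bigcup(T_1'\bigcap -qT_1').
\end{split}
\end{equation*}
Here $T_1\bigcap -qT_1=\emptyset$ by Lemma~\ref{le:10}; $T_1\bigcap -qT_1'=T_1\bigcap T_1'=\emptyset$ and $T_1'\bigcap -qT_1'=T_1'$ by self-conjugacy together with the disjointness of the decomposition; and the remaining term $T_1'\bigcap -qT_1=\emptyset$ follows from a comparison of index ranges, exactly as in the inequality chain of Theorem~\ref{th:3.4} adapted to this case. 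Hence $Z_1=T_1'$, and counting the coset families of $T_1'$ gives $|Z_1|=4\alpha(a\alpha+a-m)+a-2m$. As a numerical check one confirms $|T_1|=|Z|-|T_1'|=2(k-\alpha)[2a\alpha+(a-m)]\ge0$, which is consistent with $1\le\alpha\le k$.

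The main obstacle is writing down $T_1'$ correctly and verifying the single substantive cross term $T_1'\bigcap -qT_1=\emptyset$. Because $q=2ak+a-m$ introduces the fractional offset $\frac{a-2m}{2}=\frac{(m-1)^2}{2}$ and uses the parameter $a-m$ in place of the cleaner $m$ of Case~\uppercase\expandafter{\romannumeral 1}, the split between ``central'' and ``boundary'' values of $v$ is more intricate, and the $h$-dependent shifts $\gamma_1,\gamma_2$ of Lemma~\ref{le:10} must be tracked so that each coset of $Z$ is counted once and only once. Once the ranges are fixed, the inequalities giving disjointness of the cross term are routine but have to be verified block by block.
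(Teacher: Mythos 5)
Your proposal is correct and takes essentially the same approach as the paper: for this theorem the paper omits the proof entirely, saying only that it is analogous to Theorem~\ref{th:3.4}, and your argument (the decomposition $Z=T_1\bigcup T_1'$ with $T_1$ from Lemma~\ref{le:10}, self-conjugacy $-qT_1'=T_1'$, the four-term expansion of $Z\bigcap(-qZ)$ collapsing to $T_1'$, then counting) is precisely that analogy carried out. Your consistency check $|T_1|=|Z|-|T_1'|=2(k-\alpha)\left[2a\alpha+(a-m)\right]\geq 0$ is also algebraically correct and is a nice addition not present in the paper.
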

	
\begin{theorem}\label{th:12}
	Let $n=(q^2+1)/a$, $a=m^2+1$ $(m\geq1~is~odd)$, $s=(n-1)/2$ and $q=2ak+a-m~(k\geq 1)$ be an odd prime power. There are EAQMDS codes with parameters 
	$$[[n, n-4\alpha(q-a+m-a\alpha)-4(a-m)k-(a-2m),$$ $$2[\alpha q +(a-m)k+\frac{a-2m}{2}]+1;
	4\alpha(a\alpha+a-m)+a-2m]]_q.$$ where $1\leq \alpha\leq k$.
		
\end{theorem}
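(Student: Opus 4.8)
The plan is to follow the template of the proof of Theorem \ref{th:3.5} essentially verbatim, replacing the Case \uppercase\expandafter{\romannumeral 1} data by the Case \uppercase\expandafter{\romannumeral 3} data and invoking Theorem \ref{th:11} in place of Theorem \ref{th:3.4}. Fix $1 \leq \alpha \leq k$ and let $\mathcal{C}$ be the cyclic code of length $n = (q^2+1)/a$ with defining set $Z = C_{s+1} \cup C_{s+2} \cup \dots \cup C_{s+N}$, where $N = \alpha q + (a-m)k + \frac{a-2m}{2}$. By Lemma \ref{le:3.1} each coset is $C_{s+j} = \{s+j, n-(s+j)\} = \{s+j, s+1-j\}$ since $s = (n-1)/2$, so the union $C_{s+1} \cup \dots \cup C_{s+N}$ is exactly the block of $2N$ consecutive integers from $s+1-N$ to $s+N$. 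This simultaneously gives $|Z| = 2N$, hence $\dim \mathcal{C} = n - 2N$, and shows $\mathcal{C}$ has $2N$ consecutive roots; Proposition \ref{pro:2.1} then yields $d(\mathcal{C}) \geq 2N+1$, and because $n - \dim\mathcal{C} = 2N = (2N+1)-1$ the Singleton bound is met with equality, so $\mathcal{C}$ is MDS with parameters $[n, n-2N, 2N+1]_{q^2}$.

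Next I would read off the number of entangled states. By Theorem \ref{th:11}, the decomposition $Z = Z_1 \cup Z_2$ of Definition \ref{def:2.3} satisfies $|Z_1| = 4\alpha(a\alpha + a - m) + a - 2m$, and Lemma \ref{le:2.4} identifies this quantity with $c$. Applying Proposition \ref{pro:2.2} to the parity-check matrix of $\mathcal{C}$ together with this value of $c$ then produces an EAQECC of length $n$, minimum distance $d = 2N+1 = 2[\alpha q + (a-m)k + \frac{a-2m}{2}]+1$, and entanglement parameter $c = 4\alpha(a\alpha + a - m) + a - 2m$; substituting $N$ shows its dimension equals $n - 4\alpha(q-a+m-a\alpha) - 4(a-m)k - (a-2m)$, as claimed.

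Finally I would certify the EAQMDS property through the equality case of Proposition \ref{pro:1.1}. Writing $K$ for the EAQECC dimension, a direct substitution shows $n - K + c + 2 = 2d$, with both sides collapsing to $4\alpha q + 4(a-m)k + 2a - 4m + 2$, so the code saturates the entanglement-assisted Singleton bound. Since all the genuinely combinatorial effort---establishing $T_1 \cap -qT_1 = \emptyset$ and evaluating $|Z_1|$---is already discharged in Lemma \ref{le:10} and Theorem \ref{th:11}, the only obstacle that survives in this theorem is the bookkeeping: one must confirm that $\frac{a-2m}{2} = \frac{(m-1)^2}{2}$ is a nonnegative integer (which holds because $m$ odd forces $m-1$ even) so that $N$ is a well-defined positive integer, and that $2N \leq n$ so that the $2N$ consecutive roots do not wrap around modulo $n$. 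Granting these, the remaining parameter identities are routine algebra.
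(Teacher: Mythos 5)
Your proposal is correct and follows exactly the route the paper intends: the paper omits the proof of Theorem \ref{th:12}, stating it is analogous to Lemma \ref{le:3.3} and Theorems \ref{th:3.4}, \ref{th:3.5}, and your argument is precisely that template instantiated with the Case \uppercase\expandafter{\romannumeral 3} data (defining set of $2N$ consecutive cosets with $N=\alpha q+(a-m)k+\frac{a-2m}{2}$, MDS via the BCH and Singleton bounds, $c=|Z_1|$ from Theorem \ref{th:11} and Lemma \ref{le:2.4}, and the entanglement-assisted Singleton equality). Your arithmetic checks (dimension $n-4N+c$, both sides of $n-K+c+2=2d$ equal to $4\alpha q+4(a-m)k+2a-4m+2$, integrality of $\frac{a-2m}{2}=\frac{(m-1)^2}{2}$, and $2N\leq n-1$) are all valid and, if anything, slightly more careful than the paper's own exposition.
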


 \begin{table}[H]
	\centering
	\renewcommand{\tablename}
	\caption{ {TABLE $\rm\uppercase\expandafter{\romannumeral 3}$:} SAMPLE PARAMETERS OF MDS \protect\\ EAQECCs OF THEOREM 3.11 }\\
	\begin{tabular}{c<{\centering}c<{\centering}c<{\centering}c<{\centering}c<{\centering}}\\
		\hline
		$m$ & $q$  & $n$  &$\alpha$  &   Parameters
		\\
		\Xhline{1.0pt}
		
		$1$ & $29$  & $421$ &  $1$  & $[[421,289,73;12]]_{29}$ \\
		$1$ & $29$  & $421$ &  $2$  & $[[421,201,131;40]]_{29}$ \\
		$1$ & $29$  & $421$ &  $3$  & $[[421,129,189;84]]_{29}$ \\
		$1$ & $29$  & $421$ &  $4$  & $[[421,73,247;144]]_{29}$ \\
		$1$ & $29$  & $421$ &  $5$  & $[[421,33,305;220]]_{29}$ \\
		$1$ & $29$  & $421$ &  $6$  & $[[421,9,363;312]]_{29}$ \\
		$1$ & $29$  & $421$ &  $7$  & $[[421,1,421;420]]_{29}$ \\
		
		$3$ & $47$  & $221$ &  $1$  & $[[221,41,127;72]]_{47}$ \\
		$3$ & $47$  & $221$ &  $2$  & $[[221,1,221;220]]_{47}$ \\
		
		$3$ & $67$  & $449$ &  $1$  & $[[449,161,181;72]]_{47}$ \\
		$3$ & $67$  & $449$ &  $2$  & $[[449,41,315;220]]_{47}$ \\
		$3$ & $67$  & $449$ &  $3$  & $[[449,1,449;448]]_{47}$ \\
		
		$5$ & $229$  & $2017$ &  $1$  & $[[2017,937,643;204]]_{229}$ \\
		$5$ & $229$  & $2017$ &  $2$  & $[[2017,417,1101;600]]_{229}$ \\
		$5$ & $229$  & $2017$ &  $3$  & $[[2017,105,1559;1204]]_{229}$ \\
		$5$ & $229$  & $2017$ &  $4$  & $[[2017,1,2017;2016]]_{229}$ \\
	
		\hline		
	\end{tabular}
\end{table}

\noindent\textbf{Case \uppercase\expandafter{\romannumeral 4} $\bm{~~~~q=2ak+2a-m}$}\\

   Similarly, for the case that $n=(q^2+1)/a$,~$a=m^2+1$ ($m\geq1$~is~odd) and $q=2ak+2a-m$ $(k\geq 1)$ is an odd prime power, there are EAQMDS codes as follows. We only list the results in the following lemma and theorems and omit proofs for simplification.
    
\begin{lemma}\label{le:13}
	Let $n=(q^2+1)/a$,~$a=m^2+1$ $(m\geq1~is~odd)$, $s=(n-1)/2$ and $q=2ak+2a-m~(k\geq 1)$ be an odd prime power. For a positive integer $1\leq \alpha\leq k$, let 
	\\$$T_{1}=
	\bigcup_{\substack{s+(m+t)(k+1)+h+\alpha \leq v \leq s+ (m+t+2)(k+1)+(h-3)-\alpha,\\if~ v\leq s+(2a-m)k+2(a-m),~0 \leq u\leq\alpha,~else~ 0 \leq u\leq\alpha-1\\-m\leq t\leq (2m-1)m~and~t~is~odd}}C_{uq+v}
	$$
	$$when~ -m\leq t \leq -1,~h=2.$$
	$$when~ 1\leq t\leq 2m-1,~h=1.$$
	$$when~ 2m+1 \leq t \leq 4m-1,~h=0.$$
	$$\dots\dots$$
	$$when~ (2m-2)m+1 \leq t \leq (2m-1)m,~h=2-m.$$
	Then $T_1\bigcap-qT_1=\emptyset$.
	
\end{lemma}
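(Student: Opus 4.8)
The plan is to mirror the proof of Lemma \ref{le:3.3} for Case \uppercase\expandafter{\romannumeral 1}, adapting every computation to the present form $q=2ak+2a-m=2a(k+1)-m$. The first step is purely arithmetic: from $n=(q^2+1)/a$ and $a=m^2+1$ one records $n=4a(k+1)^2-4(k+1)m+1$ and $s=(n-1)/2=2a(k+1)^2-2(k+1)m$, which replace the quantities $n=4ak^2+4km+1$, $s=2ak^2+2km$ of Case \uppercase\expandafter{\romannumeral 1} and explain the factors $(k+1)$ appearing throughout the definition of $T_1$.

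Next I would apply Lemma \ref{le:3.2}, case 4), to each coset of $T_1$, using $-qC_{uq+v}=C_{vq-u}$, so that $-qT_1$ becomes the union over the same index ranges of the cosets $C_{vq-u}$; the inequalities on $v$ and $u$ are inherited unchanged, only the representative passing from $uq+v$ to $vq-u$. Because every coset here has the shape $C_i=\{i,n-i\}$, I would then reduce all representatives to a fixed fundamental domain modulo $n$: the constraints $0\le u\le\alpha$ (when $v\le s+(2a-m)k+2(a-m)$) or $0\le u\le\alpha-1$ (otherwise), together with the explicit value of $s$, keep each $T_1$-representative $uq+v$ inside a window lying just below $n$, while each reduced $-qT_1$-representative $vq-u\pmod n$ is pushed into a disjoint window.

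The comparison itself is organized block by block according to the $m+1$ groups indexed by $h=2,1,\dots,2-m$, with the first group ($-m\le t\le-1$) further split according to whether $v\le s+(2a-m)k+2(a-m)$. For each pair consisting of a source block of $T_1$ and a target block of $-qT_1$ I would establish, exactly as in Lemma \ref{le:3.3}, the separating inequalities on the reduced representatives, and since $C_i=\{i,n-i\}$ one must rule out both $uq+v\equiv vq-u$ and $uq+v\equiv -(vq-u)\pmod n$; both follow from the same range estimates. Collecting these gives $T_1\cap-qT_1=\emptyset$, which is the analog of the identity used in Theorem \ref{th:3.4} to read off $|Z_1|$ and hence $c$.

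The hard part is the modular bookkeeping rather than any single inequality. There are on the order of $m$ groups, and within each a band of roughly $m$ cosets, so the number of source/target pairs grows quadratically in $m$, and one must keep careful track of which representatives require reduction modulo $n$. Since the only differences from Case \uppercase\expandafter{\romannumeral 1} are the replacement of $k$ by $k+1$ and the downward indexing of $h$ (from $2$ to $2-m$ rather than upward from $1$ to $m+1$), these amount to translating the intervals without altering their relative separation; I would therefore verify the two boundary groups $t=-m$ and $t=(2m-1)m$ explicitly and deduce the intermediate ones by the same monotonicity that governs Case \uppercase\expandafter{\romannumeral 1}.
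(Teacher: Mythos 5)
Your proposal is correct and takes essentially the same approach as the paper: the paper omits the proof of this lemma as being analogous to Lemma \ref{le:3.3}, and your plan --- recomputing $n$ and $s$ for $q=2a(k+1)-m$, applying case 4) of Lemma \ref{le:3.2} to replace each $C_{uq+v}$ by $C_{vq-u}$, and separating the representatives block by block --- is exactly that intended adaptation. Your explicit insistence on reducing modulo $n$ and ruling out both $uq+v\equiv vq-u$ and $uq+v\equiv -(vq-u) \pmod n$ (since $C_i=\{i,n-i\}$) is, if anything, slightly more careful than the paper's own template argument for Lemma \ref{le:3.3}.
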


\begin{theorem}\label{th:14}
	Let $n=(q^2+1)/a$, $a=m^2+1$ $(m\geq1~is~odd)$ and $q=2ak+2a-m$ $(k\geq 1)$ be an odd prime power. For a positive integer m with $1\leq \alpha\leq k$, 
	let $\mathcal{C}$ be a cyclic code with defining set $Z$ given as follows 
	$$Z=C_{s+1}\bigcup C_{s+2}\bigcup\dots\bigcup C_{s+[\alpha q+(2a-m)k+2(a-m)]}.$$
	Then $|Z_{1}|=4\alpha(a\alpha +2a-m)+4(a-m)$.
\end{theorem}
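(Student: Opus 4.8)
The plan is to mirror the proof of Theorem~\ref{th:3.4} step for step, substituting the Case~I data by the Case~IV data supplied by Lemma~\ref{le:13}. First I would write the defining set $Z=\bigcup_{i=1}^{\alpha q+(2a-m)k+2(a-m)}C_{s+i}$ as a disjoint union $Z=T_1\cup T_1'$, where $T_1$ is exactly the family introduced in Lemma~\ref{le:13} and $T_1'$ is the complementary collection of cyclotomic cosets. Since the statement only records $T_1$, the first genuine task is to produce an explicit expression for $T_1'$ as a union of cosets $C_{uq+v}$, indexed so that $T_1$ and $T_1'$ together account for each of $s+1,\dots,s+[\alpha q+(2a-m)k+2(a-m)]$ exactly once. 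This parametrization must respect the $q=2ak+2a-m$ arithmetic, which is why the bands in Lemma~\ref{le:13} carry the $(k+1)$ factors and the shift $2(a-m)$ rather than the $k$, $mk$ data of Case~I; the threshold separating the ranges $0\le u\le\alpha$ and $0\le u\le\alpha-1$ is now $v\le s+(2a-m)k+2(a-m)$.

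Next I would establish the two structural facts that drive the count. Lemma~\ref{le:13} already gives $T_1\cap(-qT_1)=\emptyset$. The remaining claim is that $T_1'$ is invariant under the map $x\mapsto-qx$, that is $-qT_1'=T_1'$. For this I would apply the coset identity $-qC_{uq+v}=C_{vq-u}$ of Lemma~\ref{le:3.2} (which rests on $q^2\equiv-1\pmod n$, since $an=q^2+1$) to each sub-union defining $T_1'$, and verify that the resulting reflection of type $(u,v)\mapsto(v,u)$ permutes the index blocks of $T_1'$ among themselves. Granting these facts together with the vanishing of the cross terms $T_1\cap(-qT_1')$ and $T_1'\cap(-qT_1)$, the distributive expansion
$$Z_1=Z\cap(-qZ)=(T_1\cup T_1')\cap\bigl((-qT_1)\cup(-qT_1')\bigr)=T_1'$$
collapses exactly as in Theorem~\ref{th:3.4}, where $Z_1$ is the set of Definition~\ref{def:2.3}. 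Finally, by Lemma~\ref{le:3.1} every coset $C_i=\{i,n-i\}$ has cardinality $2$ for $1\le i\le n-1$, so $|T_1'|$ is twice the number of distinct indices $uq+v$ occurring in its defining union; summing the block sizes over the admissible ranges of $u$ should telescope to $4\alpha(a\alpha+2a-m)+4(a-m)$.

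The main obstacle will be the bookkeeping in the first and last steps: writing $T_1'$ so that $Z=T_1\sqcup T_1'$ holds with no missing or doubly counted index, and then confirming that the reflection genuinely fixes $T_1'$ as a set. The $h$-parameter, which now steps through $h=2,1,0,\dots,2-m$ as the odd index $t$ increases, encodes the diagonal bands onto which the reflection acts, and aligning these bands correctly — rather than the arithmetic of the final cardinality, which becomes routine once the index set is pinned down — is where care is required. Before trusting the general count I would validate the entire scheme on the smallest instances (for example $m=1$ with small $k$ and $\alpha$), checking both $-qT_1'=T_1'$ and the value $4\alpha(a\alpha+2a-m)+4(a-m)$ directly.
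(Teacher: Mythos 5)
Your proposal follows exactly the route the paper intends: the paper omits the proof of Theorem~\ref{th:14} entirely, stating that the Case~IV results are proved like those of Case~I, and your plan --- decomposing $Z=T_1\cup T_1'$ with $T_1$ from Lemma~\ref{le:13}, showing $-qT_1'=T_1'$ via the identity $-qC_{uq+v}=C_{vq-u}$ of Lemma~\ref{le:3.2}, collapsing $Z_1=Z\cap(-qZ)$ to $T_1'$ by the disjointness facts, and counting $|T_1'|$ using $|C_i|=2$ --- is precisely the argument of Theorem~\ref{th:3.4} transplanted to Case~IV. The approach and all key steps match the paper's; no genuine gap or divergence.
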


\begin{theorem}\label{th:15}
	Let $n=(q^2+1)/a$, $a=m^2+1$ $(m\geq1~is~odd$) and $q=2ak+2a-m$ $(k\geq 1)$ be an odd prime power. There are EAQMDS codes with parameters 
	$$[[n, n-4\alpha[q-(2a-m)-a\alpha]-4(2a-m)k-4(a-m),$$
	$$ 2[\alpha q+(2a-m)k+2(a-m)]+1; 4\alpha(a\alpha +2a-m)+4(a-m)]]_q,$$ where $1\leq \alpha\leq k$.
	
\end{theorem}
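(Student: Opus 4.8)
The plan is to follow the template of Theorem~\ref{th:3.5}, only the numerology changing because now $q=2ak+2a-m$. First I would let $\mathcal{C}$ be the cyclic code of length $n=(q^2+1)/a$ with defining set
$$Z=C_{s+1}\bigcup C_{s+2}\bigcup\dots\bigcup C_{s+[\alpha q+(2a-m)k+2(a-m)]},$$
and abbreviate $\delta-1=\alpha q+(2a-m)k+2(a-m)$ for the number of cosets involved. By Lemma~\ref{le:3.1} each $C_i=\{i,n-i\}$ has exactly two elements, and since $n=2s+1$ the indices $s+1,\dots,s+(\delta-1)$ all exceed $s$ while their negatives $n-(s+j)=s-j+1$ lie at or below $s$; hence the $\delta-1$ cosets are pairwise distinct, giving $|Z|=2(\delta-1)$ and $\dim\mathcal{C}=n-2(\delta-1)$.

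Second, I would read off the minimum distance from the BCH bound. The positive indices together with their negatives form the single block of consecutive residues $\{s-\delta+2,\dots,s,s+1,\dots,s+\delta-1\}$, a run of $2(\delta-1)$ consecutive elements (this requires $\delta-1\le s$ so that the block does not wrap around modulo $n$, which holds throughout $1\le\alpha\le k$, with equality $\delta-1=s$ only at $\alpha=k$). Proposition~\ref{pro:2.1} then gives $d(\mathcal{C})\ge 2(\delta-1)+1$, and the classical Singleton bound forces equality, so $\mathcal{C}$ is MDS with parameters $[n,\,n-2(\delta-1),\,2(\delta-1)+1]_{q^2}$.

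Third, the number of entangled states is delivered by Theorem~\ref{th:14}, namely $c=|Z_1|=4\alpha(a\alpha+2a-m)+4(a-m)$; this is the step where the real combinatorial work sits, carried out in Lemma~\ref{le:13} via the decomposition of the defining set. Feeding the MDS code into Proposition~\ref{pro:2.2} and Lemma~\ref{le:2.4} produces an EAQECC with entanglement $c$ and quantum dimension $2\dim\mathcal{C}-n+c$. Substituting $\dim\mathcal{C}=n-2(\delta-1)$ and the value of $c$, and simplifying exactly as in Theorem~\ref{th:3.5}, collapses the dimension to
$$n-4\alpha[q-(2a-m)-a\alpha]-4(2a-m)k-4(a-m),$$
with distance $2[\alpha q+(2a-m)k+2(a-m)]+1$, matching the claim.

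Finally I would confirm the code is extremal. Writing $k_Q$ for the quantum dimension and $d$ for the distance, a one-line check gives $n-k_Q+c+2=4(\delta-1)+2=2d$, i.e. $n+c-k_Q=2(d-1)$, so Proposition~\ref{pro:1.1} certifies the code as EAQMDS. The only genuinely delicate point left in this argument is the BCH bookkeeping: verifying that the two halves of $Z$ splice into one unbroken run of $2(\delta-1)$ consecutive residues and that this run stays inside $\{0,\dots,n-1\}$ for every $\alpha\le k$; once Theorem~\ref{th:14} is granted, the remainder is routine arithmetic.
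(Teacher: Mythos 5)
Your proposal is correct and follows essentially the same route as the paper: the paper explicitly omits the proof of Theorem~\ref{th:15}, stating it is analogous to the proof of Theorem~\ref{th:3.5}, and your argument is precisely that template --- defining set of consecutive cosets $C_{s+1},\dots,C_{s+(\delta-1)}$, dimension and BCH/Singleton argument for the classical MDS code, entanglement count $c=|Z_1|$ from Theorem~\ref{th:14}, then Proposition~\ref{pro:2.2}, Lemma~\ref{le:2.4} and the Singleton-equality check via Proposition~\ref{pro:1.1}. In fact your write-up supplies details the paper leaves implicit (distinctness of the cosets, the splicing of $\{s-\delta+2,\dots,s+\delta-1\}$ into one run of $2(\delta-1)$ consecutive residues, and the boundary case $\delta-1=s$ at $\alpha=k$), so it is, if anything, more complete than the paper's own treatment.
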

    
    \begin{table}[H]
    \centering
    \renewcommand{\tablename}
    \caption{ {TABLE $\rm\uppercase\expandafter{\romannumeral 4}$:} SAMPLE PARAMETERS OF MDS \protect\\ EAQECCs OF THEOREM 3.14 }\\
    \begin{tabular}{c<{\centering}c<{\centering}c<{\centering}c<{\centering}c<{\centering}}\\
    \hline
    $m$ & $q$  & $n$  &$\alpha$  &   Parameters\\
    \Xhline{1.0pt}
    $1$ & $23$  & $265$ &  $1$  & $[[265,129,81;24]]_{23}$ \\
    $1$ & $23$  & $265$ &  $2$  & $[[265,73,127;60]]_{23}$ \\
    $1$ & $23$  & $265$ &  $3$  & $[[265,33,173;112]]_{23}$ \\
    $1$ & $23$  & $265$ &  $4$  & $[[265,9,219;180]]_{23}$ \\
    $1$ & $23$  & $265$ &  $5$  & $[[265,1,265;264]]_{23}$ \\
    
    $3$ & $97$  & $941$ &  $1$  & $[[941,361,359;136]]_{97}$ \\
    $3$ & $97$  & $941$ &  $2$  & $[[941,161,553;324]]_{97}$ \\
    $3$ & $97$  & $941$ &  $3$  & $[[941,41,747;592]]_{97}$ \\
    $3$ & $97$  & $941$ &  $4$  & $[[941,1,941;940]]_{97}$ \\
    
    $5$ & $151$  & $877$ &  $1$  &$[[877,105,575;376]]_{151}$ \\
    $5$ & $151$  & $877$ &  $2$  &$[[877,1,877;876]]_{151}$ \\
   
    		\hline		
    	\end{tabular}
    \end{table} 

\noindent\textbf{Remark}
~The required number of entangled
states $c$ of the EAQMDS codes obtained in the literatures (see for
instance \cite{ref13},  \cite{ref14},  \cite{ref15},  \cite{ref16}) is fixed. However, the EAQMDS codes we constructed in Theorems 3.5, 3.8, 3.11 and 3.14 with flexible parameters. And we list the parameters of these codes
 in Table $\uppercase\expandafter{\romannumeral 1}$, $\uppercase\expandafter{\romannumeral 2}$,  $\uppercase\expandafter{\romannumeral 3}$ and $\uppercase\expandafter{\romannumeral 4}$.

\section{Conclusion}
    In this paper, we have utilized decomposing the defining set of cyclic codes to extend the present results about EAQMDS codes and construct series of EAQMDS codes with good parameters of length $n=(q^2+1)/a$. They conclude all the length with the form $n=(q^2+1)/a$, where $a=m^2+1$, $m\geq1$~is~odd and $q$ is an odd prime power with the form of $a|(q+m)$ or $a|(q-m)$. Compared with known results, these EAQMDS codes have flexible parameters and much bigger minimum distance than the known quantum MDS codes and EAQECCs with the same length. Therefore, our EAQMDS codes can detect and correct more errors. The study of EAQECCs is an interesting problem in coding theory. We believe that it will have tremendous application in the future.

\dse{Acknowledgments}
    This research is supported by the National Natural Science Foundation of China (No.61772168).


\begin{thebibliography}{199}
\bibitem{ref1} Shor,  P. Scheme for reducing decoherence in quantum memory. Phys. Rev. A \textbf{52}(4), 2493-2496 (1995)

\bibitem{ref2} Calderbank, A., Shor, P. Good quantum error-correcting codes exist. Phys. Rev. A \textbf{54}(2), 1098-1105 (1996)

\bibitem{ref3} Calderbank, A., Rains, E., Shor, P., Sloane, N. Quantum error correction via codes over GF(4). IEEE Trans. Inf. Theory \textbf{44}(4), 1369-1387 (1998)

\bibitem{ref4} Chen, H., Ling, S., Xing, C.P. Quantum codes from concatenated algebraic-geometric codes [J]. IEEE Trans. Inf. Theory \textbf{51}(8), 2915-2920 (2005)

\bibitem{ref5} La Guardia, G.G. Constructions of new families of nonbinary quantum codes [J]. Phys. Rev. A  \textbf{80}(4), 042331(1-11) (2009)

\bibitem{ref9} Zhang, T., Ge, G. Some new classes of quantum MDS codes from constacyclic codes. IEEE Trans. Inf. Theory \textbf{61}(9), 5224-5228 (2015)

\bibitem{ref10} Li, S., Xiong, M., Ge, G. Pseudo-cyclic Codes and the Contruction of Quantum MDS codes [J]. IEEE Trans. Inf. Theory \textbf{62}(4), 1703-1710 (2016)

\bibitem{ref29} Qian, J., Zhang, L. Improved constructions for quantum maximum distance separable codes. Quantum Inf. Process. \textbf{16}(20), 1-10 (2017) 

\bibitem{ref11} Brun, T., Devetak, I., Hsieh, M. Correcting quantum errors with entanglement. Science \textbf{314}(5798), 436-439 (2006)

\bibitem{ref13} Lu, L., Li, R. Entanglement-assisted quantum codes constructed from primitive quaternary BCH codes. Int. J. Quantum Inf. \textbf{12}(3), 14500151-145001514 (2014)

\bibitem{ref12} Grassl, M. Entanglement-assisted quantum communication beating the quantum Singleton bound.
AQIS, Taipei (2016)

\bibitem{ref14} Li, R., Li, X., Guo, L. On entanglement-assisted quantum codes achieving the entanglement-assisted Griesmer bound. Quantum Inf. Process. \textbf{14}(12), 4427-4447 (2015)

\bibitem{ref15} Chen, J., Huang, Y., Feng, C., Chen, R. Entanglement-assisted quantum MDS codes constructed from negacyclic codes. Quantum Inf. Process. \textbf{16}(303), 1-22 (2017)

\bibitem{ref16} Chen, X., Zhu, S., Kai, X. Entanglement-assisted quantum MDS codes constructed from constacyclic codes. Quantum Inf. Process. \textbf{17}(273), 1-18 (2018)

\bibitem{ref18} Fan, J., Chen, H., Xu, J. Construction of $q$-ary entanglement-assisted quantum MDS codes with minimum distance greater than $q+1$. Quantum Inf. Comput. \textbf{16}(5\&6), 0423-0434 (2016)

\bibitem{ref19} Qian, J., Zhang, L. On MDS linear complementary dual codes and entanglement-assisted quanum codes. Des. Codes Cryptogr. \textbf{86}(7), 1565-1572 (2018)

\bibitem{ref20}  Guenda, K., Jitman, S., Gulliver, T. Constructions of good entanglement-assisted quantum error correcting codes. Des. Codes Cryptogr. \textbf{86}(1), 121-136 (2018)

\bibitem{ref21} Lu, L., Ma, W., Li, R., Ma, Y., Liu, Y., Cao, H. Entanglement-assisted quantum MDS codes from constacyclic codes with large minimum distance. Finite Fields Their Appl. \textbf{53}, 309-325 (2018)

\bibitem{ref22} Luo, G., Cao, X. Two new families of entanglement-assisted quantum MDS codes from generalized Reed–Solomon codes. Quantum Inf. Process. \textbf{18}(89), 1-12 (2019) 

\bibitem{ref23} Liu, X., Yu, L., Hu, P. New entanglement-assisted quantum codes from k-Galois dual codes. Finite Fields Appl. \textbf{55}, 21-32 (2019)

\bibitem{ref24} Guo, L., Li, R. Linear Plotkin bound for entanglement-assisted quantum codes. Phys. Rev. A \textbf{87}, 032309 (2013)

\bibitem{ref6} Luo, G., Cao, X., Chen, X. MDS Codes with Hulls of Arbitrary Dimensions and Their Quantum Error Correction. IEEE Trans. Inf. Theory $\bm{65}$(5), 2944-2952~(2019) 

\bibitem{ref25} Qian, J., Zhang, L. Constructions of new entanglement-assisted quantum MDS and almost MDS codes. Quantum Inf. Process. \textbf{18}(71), 1-12 (2019) 

\bibitem{ref26} Wang, J., Li, R, Lv, J., Guo, G., Liu Y. Entanglement-assisted quantum error correction codes
with length $n = q^2+1$. Quantum Inf. Process. \textbf{18}(292), 1-21 (2019)

\bibitem{ref27} MacWilliams, F., Sloane, N. The theory of error correcting codes. North-Holland, Amsterdam  (1977)

\bibitem{ref28} Wilde, M., Brun, T. Optimal entanglement formulas for entanglement-assisted quantum coding. Phys. Rev. A \textbf{77}(6), 064302 (2008)







\end{thebibliography}
\end{document}